\documentclass[11pt,onecolumn]{article}
\usepackage{amsmath, amssymb}
\usepackage{amsthm}
\usepackage{float}
\usepackage{enumitem}
\usepackage{color}
\usepackage{cleveref}

\usepackage{mleftright}

\usepackage[linesnumbered,ruled,vlined]{algorithm2e}
\usepackage[noend]{algpseudocode}
\usepackage{graphicx,tikz,subfig}
\usetikzlibrary{calc,fit}

\usepackage[left=0.9in,top=0.9in,right=0.9in,bottom=0.5in,nohead,footskip=0.3in]{geometry}

\usepackage{titlesec}
\titlespacing*{\section}{0pt}{11pt}{11pt}
\titlespacing*{\subsection}{0pt}{11pt}{11pt}

\newtheorem*{rep@theorem}{\rep@title}
\newcommand{\newreptheorem}[2]{%
\newenvironment{rep#1}[1]
{ \def\rep@title{#2 \ref{##1}} \begin{rep@theorem}} {\end{rep@theorem}} }

\newtheorem{theorem}{Theorem}
\newreptheorem{theorem}{Theorem}
\newtheorem{lemma}[theorem]{Lemma}

\newtheorem{observation}[theorem]{Observation}
\newtheorem{proposition}[theorem]{Proposition}
\newtheorem{corollary}[theorem]{Corollary}

\theoremstyle{definition}
\newtheorem{definition}[theorem]{Definition}

\newcommand{\E}{{\mathbb E}}

\begin{document}

\title{A note on rounding fractional matchings with \\ constant-factor strong negative correlation}
\author{David G. Harris\footnote{University of Maryland. Email: {\tt davidgharris29@gmail.com}}}

\maketitle

\abstract{We describe new dependent-rounding algorithms for bipartite graphs. Given a fractional matching $x$ of graph $G = (U \cup V, E)$, the algorithms return an integral solution $X$ such that each right-node $v \in V$ has exactly one edge, and where the variables $X_e$ also satisfy broad non-positive correlation properties. In particular, for distinct edges $e, f$ sharing a left-node $u \in U$, the variables $X_{e}, X_{f}$ have \emph{strong} negative-correlation, i.e. the expectation of $X_{e} X_{f}$ is significantly below $x_{e} x_{f}$. 

Dependent rounding schemes with these properties have been used for  approximation algorithms for job-scheduling on unrelated machines to minimize weighted completion times, among other applications. Our new algorithm achieves simpler and qualitatively stronger bounds compared to prior algorithms. In particular, we achieve a negative-correlation property $$
\E[X_{e} X_{f}] \leq 0.79751 \ x_{e} x_{f},
$$
which is a significant constant-factor improvement over Baveja, Qu \& Srinivasan (2024). 

We show that the constant cannot be reduced below $41/80$ by any comparable rounding algorithm.
}

\section{Introduction}
Many scheduling and resource 
allocation problems can be formulated in terms of  a \emph{bipartite assignment problem}: we are given a complete bipartite graph $G = (U \cup V, E)$, and we wish to select a ``half-matching" $K$, i.e. a set of edges $K$ which intersects each right-node $v \in V$ exactly once. For instance, $V$ can represent a set of jobs to be scheduled, and $U$ can represent a set of possible machines. Alternatively, $V$ can represent items to be sold, and $U$ can represent potential buyers.

A popular strategy for such problems is to first solve a relaxation (e.g., a linear program), obtaining fractional vectors $x_e: e \in E$, and then round this to an integral solution $X_e: e \in E$ where $X_e$ is the indicator that $e \in K$. For this strategy to be viable, the fractional solution $x$ must satisfy $x(N(v)) = 1$ for all right-nodes $v$. (Here, $N(v)$ denotes the set of edges incident on vertex $v$ and we write $x(L) = \sum_{e \in L} x_e$ for any edge-set $L$). One particularly nice scenario has  the left-nodes also satisfying $x(N(u)) = 1$, i.e. a fractional matching. 

The simplest rounding method, known as \emph{independent rounding}, is that each right-node $v \in V$ independently selects one neighboring edge $e \in N(v)$, wherein each edge is selected with probability $x_e$. However, the left-nodes (e.g. the machines in a scheduling problem) can then become unevenly loaded due to random fluctuations. This can be undesirable for some allocation problems.

A new rounding approach was proposed in \cite{bansal2016lift}, based on \emph{dependent rounding with strong negative correlation.} This was applied to a classical scheduling problem of minimizing weighted completion time on unrelated machines. The edges are tied together in a scheme wherein each edge $e$ is still \emph{marginally} selected with probability $x_e$, while neighboring edge pairs satisfy a stronger property:
\begin{equation}
\label{ata66}
\E[X_e X_f] \leq c \ x_e x_f \qquad \text{for a  constant $c < 1$}
\end{equation}

Of course, independent rounding would give $\E[X_e X_f] = x_e x_f$ exactly.

 The guarantee of (\ref{ata66}) is known as \emph{(constant-factor) strong negative correlation.}
Since then, a variety of rounding schemes with strong negative correlation have been developed, leading to improved approximation ratios for various scheduling problems \cite{im2020,im2023,baveja2024,harris2024dependent,harris2025}. Generally speaking, these fall into two classes. The first, which includes the original work of \cite{bansal2016lift}, uses a random walk: at each stage, the fractional vector $x$ is modified, until eventually it becomes integral. The modification rule is chosen so that if two edges $e,f$ share a left-node, then one value $x_e$ is incremented and the other value $x_f$ is decremented. An improved version of this rounding scheme was later developed in \cite{baveja2024}.

The second class, developed originally by \cite{im2023}, is based on ideas from \emph{contention-resolution} in auctions. Here, each left-node ``bids" for the right-nodes, and these demands are balanced so that each left-node does not get too many edges.  The original work of \cite{im2023} was based on Poissonian tickets for the allocation; the later work  \cite{harris2024dependent} was based on using a multivariate geometric distribution. Most recently, the work \cite{harris2025} developed a method based on Dirichlet random variables; we will discuss this in greater detail later.

These two rounding strategies have quite different behaviors. The contention-resolution algorithms are typically most powerful where the entries of $\vec x$ are infinitesimal. For example, for a fractional matching $\vec x$, the algorithm of \cite{harris2025} achieves
$$
\E[ X_e X_f ] \leq \frac{1}{2} \cdot x_e x_f (1 + o(1))
$$
where the $o(1)$ term applies as $x_e, x_f \rightarrow 0$.  The contention-resolution algorithms appear to provide essentially no strong negative correlation for large values of $x_e, x_f$. 

By contrast, the random-walk based algorithms typically provide a constant-factor negative correlation; for example, the algorithm of \cite{baveja2024} gives
$$
\E[ X_e X_f ] \leq \frac{26}{27} \cdot x_e x_f
$$
for all values $x_e, x_f$. 

For algorithmic applications, the infinitesimal values are typically the bottleneck in determining the approximation ratio. Thus the random-walk algorithms end up being quantitatively weaker than the Dirichlet-based algorithm. However, the contention-resolution algorithms have extremely complicated and non-algebraic formulas; it is painful to carry out the numerical analysis and often requires large computer searches to bound the relevant formulas.

\subsection{Our contribution}
Our main contribution is to develop a new random-walk based rounding algorithm for negative correlation. It is a variant of the algorithm of \cite{baveja2024}. On its own, it already significantly improves over the previous negative correlation factor.

Going one step further, we observe that this algorithm has better negative correlation when the values $x_e, x_f$ are \emph{large},  precisely the opposite behavior of contention-resolution. We exploit this property via a hybrid approach which interpolates between the new rounding algorithm and the Dirichlet rounding algorithm. This gives even better constant-factor guarantees over all ranges.

In order to state our results in greatest generality, we introduce a key definition. 

\begin{definition}[Stable edge set \cite{harris2024dependent}]
An edge set $S \subseteq E$ of $G$ is \emph{stable} if it has no edges $e_1, e_2 \in S$ whose distance in the line graph of $G$ is precisely two. 
\end{definition}

We can formally state our rounding method as follows:
\begin{theorem}
\label{mainthm34}
Let  $\vec x$ be a fractional matching of the simple bipartite graph $G = (U \cup V, E)$. Our rounding algorithm generates an integral solution $\vec X \in \{0, 1 \}^E$ with the following properties:

\begin{itemize}
\item[(A1)] $\sum_{e \in N(v)} X_e = 1$ for all right-nodes $v$.
\item[(A2)] For each edge $e$ there holds $\E[X_e] = x_e$
\item[(A3)] For any stable edge-set $S \subseteq E$, there holds $
\E[ \prod_{e \in S} X_e ] \leq \prod_{e \in S} x_e
$
\item[(A4)] For any pair of distinct edges $e, f$ sharing a common left-node, we have $
\E[X_e X_f] \leq 0.79751 \ x_e x_f
$
\end{itemize}
\end{theorem}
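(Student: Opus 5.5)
We construct the rounding algorithm explicitly and then check (A1)--(A4) one by one. The construction has two ingredients, and we combine them as a hybrid.

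\textbf{The random-walk component.} We modify the algorithm of \cite{baveja2024} into a random walk on fractional matchings.
\begin{itemize}
\item At each step we take a cycle or maximal path $C$ in the support of the fractional edges $\{e : 0<x_e<1\}$.
\item We alternately increase and decrease the edges of $C$ by $\pm\delta$, choosing the two step sizes so that the update is a martingale. Each left-node on $C$ then sees one edge go up and one edge go down.
\item Every right-node keeps $x(N(v))=1$, and each step makes at least one edge integral. The process therefore terminates with an integral $X$ satisfying (A1).
\end{itemize}
The martingale property gives (A2) directly.

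For (A3), let $S$ be a stable set and consider $\prod_{e\in S} x_e$. Stability means that no two edges of $S$ are at line-graph distance exactly two. So any two edges of $S$ either share a node or lie far apart. Along an alternating path they therefore never receive correlated perturbations of the same sign. Edges sharing a left-node move in opposite directions, and edges sharing a right-node likewise move in opposite directions. Hence each step changes the product by a sum of terms $\pm\delta x_{S\setminus e}$ that have zero mean, plus second-order cross terms $-\delta_e\delta_f$ that are nonpositive. The product is thus a supermartingale, which gives (A3).

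The only delicate point is the left endpoint of a maximal path, where the degree of $x$ at the left-node may be below $1$. I would handle this by adding a dummy right-node that absorbs the slack, so the argument above still applies there.

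\textbf{The quantitative part (A4).} Fix a left-node $u$ and edges $e,f\in N(u)$. Track $Y_t = x_e(t)x_f(t)$ together with the bonus from the steps in which $e$ and $f$ are moved in opposite directions. Each time the walk passes through $u$ using both $e$ and $f$, $Y$ drops in expectation by $\E[\delta_e\delta_f]$.

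The extra design choice, compared with \cite{baveja2024}, is that the walk at $u$ must pair $e$ and $f$ directly often enough. I would do this by randomly ordering the edges at $u$ into consecutive pairs whenever the walk enters $u$. I would then derive a differential or recursive inequality of the form $\E[X_eX_f]\le \phi(x_e,x_f)\,x_ex_f$, with $\phi$ decreasing in $x_e+x_f$.

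\textbf{Hybrid with Dirichlet rounding.} Because the random walk is weak for small values, the final algorithm is a hybrid:
\begin{itemize}
\item First run the random walk until each edge value leaves a window $(\theta, 1-\theta)$, or for a random stopping time.
\item Then apply the Dirichlet rounding of \cite{harris2025} to the residual fractional matching.
\end{itemize}
Dirichlet rounding also preserves marginals and the stable-set property (A3), so composing the two stages preserves (A2) and (A3) by conditioning. For (A4) we take the pointwise minimum: the walk handles large $x_e,x_f$, and the Dirichlet bound $\tfrac12(1+o(1))$ handles small ones.

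The main obstacle is the numerical optimization. We must show that the worst case over $(x_e,x_f)$ and the threshold $\theta$ of the combined bound is at most $0.79751$. This requires explicit (non-asymptotic) Dirichlet bounds for moderate values, together with a careful case split on the region of $(x_e,x_f)$. I would first try to reduce the problem to a one-parameter family of worst cases and then verify it by a rigorous grid search with Lipschitz bounds.
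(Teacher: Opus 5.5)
\textbf{Gap 1: (A3) fails for your walk.} Your (A3) argument is false, and this is the central problem rather than a technicality. A walk that alternates $+,-,+,-,\dots$ along a whole cycle or path gives the \emph{same} sign to any two edges at even distance $\ge 4$ along it. Such pairs may lie in a stable set, and their cross term is $+\delta^2$, not $-\delta_e\delta_f$.

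Concretely, take $G=C_8$ with $x\equiv\tfrac12$ and two opposite edges $e,g$. They are at line-graph distance $4$, so $\{e,g\}$ is stable. Your walk preserves every degree, so it outputs one of the two perfect matchings, each with probability $\tfrac12$. Since $e,g$ lie in the same one, $\E[X_eX_g]=\tfrac12>\tfrac14$.

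In general, your walk on a fractional matching is just rounding to a perfect matching. The paper explicitly rules this out for exactly this reason. It would also make (A4) trivial with constant $0$, which is a sign that something has been lost. The paper's fix is to use \emph{local} four-edge moves. Only $e_1=(u,v_1)$ and $e_2=(u,v_2)$ in $P$, together with $f_1\in N(v_1)\setminus P$ and $f_2\in N(v_2)\setminus P$, change. The same-sign pairs $\{e_1,f_2\}$ and $\{e_2,f_1\}$ are at line-graph distance exactly $2$. Hence a stable set meets each sign class at most once, and $\prod_{e\in L}x_e$ is a supermartingale.

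\textbf{Gap 2: no bound for (A4), and the combination step is wrong.} For (A4) you never produce an actual bound, and the constant $0.79751$ is determined entirely by that specific bound. The missing idea is the protected set. Each left-node independently puts all its edges into $P$ with probability $p$, and moves are made at the eligible left-node with the largest $R_u$. The potential is $\Phi=\bar x_1\bar x_2(a_1+a_2-s)$, or $\bar x_1\bar x_2$ once $\bar e_1$ or $\bar e_2$ leaves $P$. Here $a_i=x(A_i\cap P)$ and $s$ is the weighted count of pairs in $S$. This $\Phi$ is a supermartingale, and its initial expectation is at most $\bar x_1\bar x_2\bigl(1-p+p^2(2-\bar x_1-\bar x_2)-\min\{p^3,p^2/2\}(1-\bar x_1)(1-\bar x_2)\bigr)$.

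Your combination step is also wrong: no composition of the two algorithms yields the pointwise minimum of their bounds. If you run the walk and then Dirichlet on the residual, the Dirichlet guarantee applies to the random residual values $x'_e,x'_f$, not to the original ones. Moreover, a single global choice of algorithm cannot be tailored to each pair $(e,f)$.

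The paper instead runs the Brownian algorithm with probability $q$ and Dirichlet otherwise. Properties (A1)--(A3) survive the mixture, and the (A4) factor is the \emph{convex combination} of the two bounds. For instance, as $x_e,x_f\to 0$ this gives $q(1-p+2p^2-p^3)+\tfrac{1-q}{2}\approx 0.79751$ at $p=0.439327$, $q=0.822077$. That is far from the $\tfrac12$ a pointwise minimum would suggest. The final verification then reduces to a one-variable check in $t=x_e+x_f$.
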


This constant $0.79751$ is a significant improvement over the value $\frac{26}{27} \approx 0.96$ in the previous algorithm of \cite{baveja2024}. As we show in \Cref{lb}, a simple construction with the graph $G = C_{30}$ shows that the constant cannot be reduced below $41/80$, even if we only require \emph{pairwise} negative correlation instead of the full property (A3). In particular, the Dirichlet infinitesimal bound is strictly stronger in its regime compared to constant-factor strong-negative-correlation.

\bigskip

Before proceeding further, let us make a few comments on how to interpret the result.

First, note that if our goal was to achieve (A1), (A2), (A3) alone, then we could simply use independent rounding. However, this would give $\E[X_e X_f] \leq x_e x_f$, much weaker than (A4).

Alternatively, if our goal was to achieve (A1), (A2), (A4) alone, then we could round $\vec x$ to a matching of $G$. This is possible because $G$ is bipartite, so the fractional matching polytope does not have an integrality gap. This would give $\E[X_e X_f] = 0$ for edges $e,f$ sharing a left-node. However, this does not give general negative correlation for stable edge sets, or even for pairs of edges at distance $\geq 3$.

If we relaxed the restriction of $\vec x$ being a fractional matching, then it may be impossible to guarantee (A4) or any similar bound. For example, if we had a fractional vector $\vec x$ with $x_e = 1, x_f = 0.01$ for edges $e,f$ sharing a left-node, then property (A2) would ensure that $\E[X_e X_f] = x_e x_f$, without any strong negative correlation.

Finally, note that \cite{bansal2016lift} and \cite{baveja2024} considered a slightly different setting where the edges of $G$ are grouped into ``blocks".  Strong negative correlation should hold within a block, and negative correlation within a vertex. All our results also apply in this setting, via the transformation of replacing each block of the graph $G$ with a separate left-node in a new bipartite graph $G'$.

\section{Brownian-motion algorithm}
The first algorithm we consider is based on a Brownian motion in the fractional half-matching polytope.  It maintains a set of ``protected edges'' $P$ which obey the fractional matching constraint.  Here, $p \in [0,1]$ is a parameter to choose later.

We say an edge $e$ is \emph{fractional} if $x_e \in (0,1)$ strictly, else it is \emph{integral}.

\begin{algorithm}[H]
\caption{\sloppy {$\textsc{DepRoundBrownian}(\vec x,p)$}}
Each left-node $u$ draws random variables $R_u \sim \text{Unif}([0,1])$ and  $Y_u \sim \text{Bernoulli}(p)$. \\
Initialize edge-set $P$ by setting $P \leftarrow \{ (u,v) \in E : Y_u = 1 \}$ \\
\While{$P \neq \emptyset$} {
\If{there is any edge $e = (u,v) \in P$ with $x(e) \in \{0,1 \}$ or $|N(u) \cap P| = 1$}{ 
Update $P \leftarrow P \setminus \{e \}$}
\ElseIf{any right-node $v$ has $x( N(v) \cap P) = 1$}{
 Randomly select an edge $e \in N(v) \cap P$, where each edge is chosen with probability $x_e$. \\
   Update $P \leftarrow P \setminus (N(v) \setminus \{e \})$. \\
   }
   \Else{
   Choose a left-node $u$ with $N(u) \cap P \neq \emptyset$; \hspace{10in} \thickspace  \hspace{0.5in} if there are multiple such nodes, take the one with largest $R_u$. \\
   Choose any two edges $e_1 = (u, v_1), e_2 = (u,v_2) \in P$. \\
   Choose any two fractional edges $f_1 = (u_1, v_1) \in N(v_1) \setminus P,  f_2 = (u_2, v_2) \in N(v_2) \setminus P$.\\   
   Choose $\alpha > 0$ maximal such that $x_{e_1} \pm \alpha, x_{e_2} \pm \alpha, x_{f_1} \pm \alpha, x_{f_2} \pm \alpha$ are all in the range $[0,1]$. \\
   Draw random variable $Z$ uniformly from $\{-1,1\}$. \\
   Update $x_{e_1} \leftarrow x_{e_1} + Z \alpha, x_{e_2} \leftarrow x_{e_2} - Z \alpha, x_{f_1} \leftarrow x_{f_1} - Z \alpha, x_{f_2} \leftarrow x_{f_2} + Z \alpha$. \\
   }
   }
   Set $X = \textsc{IndependentRounding}(\vec x)$. \\
 \textbf{return} $X$
\end{algorithm}

 The overall algorithm is very similar to, and is inspired by, the algorithm of \cite{baveja2024}.
After generating $P$ and $R$, the algorithm successively updates $\vec x$ and removes edges from $P$ until $P = \emptyset$. Then a final integral value  is produced by independent rounding. We refer to the loop at Lines 3 -- 15 as the \emph{main loop.} Each step is polynomial time. 

Overall, we will show the following result about this algorithm:

\begin{theorem}
The rounding algorithm satisfies properties (A1), (A2), and (A3). 
\end{theorem}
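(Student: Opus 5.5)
The plan is to track a few invariants of the main loop, show that $\vec x$ evolves as a martingale and that $\prod_{e \in S} x_e$ is a supermartingale for stable $S$, and then pass through the final $\textsc{IndependentRounding}$ step. Only Line 15 (the $\pm \alpha$ update) ever modifies $\vec x$. Lines 4--9 change only the set $P$, and even the random choice at Line 7 does not touch $\vec x$. So all the probabilistic content about $\vec x$ sits in Line 15.

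\textbf{Well-definedness and termination.} First I check that the choices in Lines 11--13 are always possible.
\begin{itemize}
\item When we reach Line 11, every edge of $P$ is fractional, and every left-node with an edge in $P$ has at least two edges there.
\item Every right-node $v$ has $x(N(v) \cap P) < 1$. Suppose for contradiction that $N(v_1) \setminus P$ has no fractional edge.
\item Then the mass $1 - x(N(v_1) \cap P) > 0$ would be carried by integral edges outside $P$. So some edge outside $P$ has value $1$, which forces $x_{e_1} = 0$, contradicting that $e_1$ is fractional.
\end{itemize}
Hence $f_1$ exists, and likewise $f_2$, and $\alpha > 0$.

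By maximality of $\alpha$, each execution of Line 15 makes at least one of the four edges integral. An integral edge is never modified again, since only fractional edges are ever chosen. Every other iteration removes at least one edge from $P$. Since $P$ only shrinks, the loop terminates after at most $2|E|$ iterations.

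\textbf{Property (A1).} The update at Line 15 adds $Z\alpha$ to $e_1$ and subtracts $Z\alpha$ from $f_1$, both incident on $v_1$. Symmetrically, it subtracts $Z\alpha$ from $e_2$ and adds $Z\alpha$ to $f_2$, both incident on $v_2$. So $x(N(v)) = 1$ holds for every right-node $v$ throughout the algorithm. Independent rounding of the final vector then picks exactly one edge at each $v$, which gives (A1).

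\textbf{Property (A2).} Each update changes $x_e$ by $\pm Z\alpha$ or $0$, where $\alpha$ is determined before $Z$ is drawn and $Z$ is symmetric. Hence $x_e$ is a martingale. The number of steps is bounded, so optional stopping gives $\E[x_e^{\text{final}}] = x_e$. Independent rounding gives $\E[X_e \mid \vec x^{\text{final}}] = x^{\text{final}}_e$, which yields (A2).

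\textbf{Property (A3).} I will show that $\Phi = \prod_{e \in S} x_e$ is a supermartingale, and then handle the final rounding.

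For the supermartingale claim, consider one step of Line 15. Let $T = S \cap \{e_1, e_2, f_1, f_2\}$, and write the update as $x_g \leftarrow x_g + s_g Z \alpha$ with signs $s_{e_1} = s_{f_2} = +1$ and $s_{e_2} = s_{f_1} = -1$. Expanding and averaging over $Z = \pm 1$ kills the odd-degree terms. This gives
\[
\E[\Phi' ] - \Phi = \sum_{\substack{W \subseteq T\\ |W| \text{ even}, W \neq \emptyset}} \alpha^{|W|} \Big(\prod_{g \in W} s_g\Big) \prod_{g \in S \setminus W} x_g .
\]
So it suffices to show that every even subset $W$ that can actually lie inside $S$ has sign product $\leq 0$. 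The key step is stability:
\begin{itemize}
\item $u_2 \neq u$, since otherwise $f_2 = e_2 \in P$. Also $v_1 \neq v_2$.
\item Therefore $e_1$ and $f_2$ are at distance exactly two in the line graph, via $e_2$. Likewise $e_2$ and $f_1$ are at distance two, via $e_1$.
\item So a stable $S$ contains neither of the pairs $\{e_1, f_2\}$ and $\{e_2, f_1\}$. In particular $|T| \le 2$, so no quadruple term appears.
\item The remaining pairs are $\{e_1,e_2\}$, $\{e_1,f_1\}$, $\{e_2,f_2\}$ and $\{f_1,f_2\}$. Each of these has sign product $-1$.
\end{itemize}
Thus $\E[\Phi'] \le \Phi$ at every step, and $\E[\Phi^{\text{final}}] \le \prod_{e \in S} x_e$.

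For the final rounding, condition on $\vec x^{\text{final}}$. If two edges of $S$ share a right-node, then by (A1) $\prod_{e \in S} X_e = 0$, which is trivially $\le \Phi^{\text{final}}$. Otherwise the edges of $S$ are rounded independently, and $\E[\prod_{e \in S} X_e \mid \vec x^{\text{final}}] = \Phi^{\text{final}}$. Taking expectations gives (A3).

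The main obstacle is the sign bookkeeping in the (A3) step: verifying that stability excludes exactly the cross pairs $\{e_1, f_2\}$ and $\{e_2, f_1\}$, which are the only ones with positive sign product. This needs $u_2 \ne u$ and $u_1 \ne u$, both of which come from the fact that $f_1, f_2 \notin P$ while $e_1, e_2 \in P$.
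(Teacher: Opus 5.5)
Most of your proof matches the paper's: the invariant $x(N(v))=1$, unbiasedness of Line~15, and the per-step supermartingale bound of Proposition~\ref{ya1}, which uses the fact that $\{e_1,f_2\}$ and $\{e_2,f_1\}$ are at line-graph distance two. Your existence argument for $f_1,f_2$, your sign bookkeeping for (A3), and your treatment of the final rounding are all correct.

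\textbf{The gap is in termination.} It is false that each execution of Line~15 makes one of the four edges integral. Since $\alpha=\min_g\min\{x_g,1-x_g\}$ is the \emph{symmetric} maximal step, the binding constraint is reached for only one of the two values of $Z$. Suppose only $x_{e_1}\ge\alpha$ binds, with $x_{e_1}<1/2$. Then $Z=+1$ sends $x_{e_1}$ to $2x_{e_1}\in(0,1)$, and no edge becomes integral. Concretely, $x_{e_1}=0.3$, $x_{e_2}=0.6$, $x_{f_1}=x_{f_2}=0.5$ gives $\alpha=0.3$, and for $Z=+1$ the new values are $0.6,0.3,0.2,0.8$. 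So there is no deterministic $2|E|$ bound on the number of iterations. Your justification of optional stopping in (A2) and (A3), ``the number of steps is bounded'', therefore does not hold.

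\textbf{The fix} is the argument of Observation~\ref{obs1}:
\begin{itemize}
\item Each execution of Line~15 increases the number of integral edges with conditional probability at least $1/2$.
\item Integral edges are never modified again, and every other iteration removes an edge from $P$.
\item Hence the number of iterations is dominated by $|E|$ plus a sum of at most $|E|$ geometric$(1/2)$ variables. The loop terminates almost surely, with $O(|E|)$ iterations in expectation.
\end{itemize}
Optional stopping still applies because $x_e$ and $\prod_{e\in S}x_e$ stay in $[0,1]$. Let $\tau$ be the almost-surely finite termination time. Stop the bounded martingale (resp.\ supermartingale) at $\tau\wedge n$ and let $n\to\infty$ using bounded convergence. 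This gives $\E[x_e^{\text{final}}]=x_e$ and $\E[\Phi^{\text{final}}]\le\prod_{e\in S}x_e$. With this correction, the rest of your argument goes through.
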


\begin{theorem}
\label{gen-neg3}
For any pair of distinct edges $\bar e_1,\bar e_2$ sharing a left-node, there holds
$$
\E \bigl[ X_{\bar e_1} X_{\bar e_2} \bigr] \leq  x_{\bar e_1} x_{\bar e_2} \bigl( 1 -p + p^2 ( 2 -  x_{\bar e_1} - x_{\bar e_2})  - \min\{ p^3, p^2/2 \} (1 - x_{\bar e_1})(1 - x_{\bar e_2})  \bigr)
$$
\end{theorem}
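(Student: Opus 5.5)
The plan is to condition on the protection bit $Y_u$ of the common left-node $u$ of $\bar e_1=(u,v_1)$ and $\bar e_2=(u,v_2)$, and to track the potential $\Phi=x_{\bar e_1}x_{\bar e_2}$ through the main loop. The final step is independent rounding, and $v_1\neq v_2$ because $G$ is simple. So $\E[X_{\bar e_1}X_{\bar e_2}]=\E[\Phi_{\rm final}]$, and it suffices to bound the expected final value of $\Phi$.

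First I will check that $\Phi$ is a supermartingale under every step. A Brownian step changes each coordinate by $\pm\alpha$ with mean zero. It touches $\bar e_1,\bar e_2$ in one of three ways:
\begin{itemize}
\item it touches at most one of them, and then $\Phi$ is a martingale;
\item it moves them in opposite directions, and then $\E[\Delta\Phi]=-\alpha^2<0$;
\item it moves them in the same direction, as $f_1,f_2$ of some other left-node. This needs $\bar e_1,\bar e_2\notin P$. I must check that this case can be excluded, or else absorbed by the same reasoning used for (A3).
\end{itemize}
Selection and removal steps do not change $\vec x$. This gives the $(1-p)\cdot x_{\bar e_1}x_{\bar e_2}$ term from the event $Y_u=0$, where I only need the (A3)-type bound $\E[\Phi_{\rm final}]\le\Phi_0$.

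On the event $Y_u=1$, I rewrite the target. The required conditional bound is
$$p(2-x_{\bar e_1}-x_{\bar e_2})-\min\{p^2,p/2\}(1-x_{\bar e_1})(1-x_{\bar e_2}),$$
which equals $1-q$ with
$$q=\bigl(1-p(1-x_{\bar e_1})\bigr)\bigl(1-p(1-x_{\bar e_2})\bigr)-\bigl(p^2-\min\{p^2,p/2\}\bigr)(1-x_{\bar e_1})(1-x_{\bar e_2}).$$
So it suffices to find a "good" event $\mathcal G$ with $\Pr[\mathcal G\mid Y_u=1]\ge q$ on which $\Phi_{\rm final}=0$, and to use the supermartingale bound off $\mathcal G$. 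Informally, $\mathcal G$ is the event that no other left-node's protected edges interfere at $v_1$ or $v_2$ before $u$ is processed.

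On $\mathcal G$, the edges of $u$ stay protected and keep $x(N(u)\cap P)=1$. Edges leave $P$ only when they become integral or are the last protected edge at $u$. Hence at termination at most one of $\bar e_1,\bar e_2$ is fractional, and both cannot equal $1$. This forces $\Phi_{\rm final}=0$.

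To lower-bound $\Pr[\mathcal G]$ I will use the ordering by $R$. A left-node $w\neq u$ can disturb $v_i$ only if $Y_w=1$, and the relevant interaction at $v_i$ is a selection step when the protected mass at $v_i$ reaches $1$. I will charge the interference at $v_i$ to the mass $1-x_{\bar e_i}$ of the other edges there. For that I will use that these edges' values are themselves martingales, so their expected values are preserved and a union/Markov-type bound gives failure probability at most $p(1-x_{\bar e_i})$ at each endpoint.

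The main obstacle is the joint term: I need to recover the positive correction $\min\{p^2,p/2\}(1-x_{\bar e_1})(1-x_{\bar e_2})$ beyond a crude union bound. My first attempt is to use inclusion--exclusion over the two endpoints. Failure at both $v_1$ and $v_2$ needs protected neighbours at both. I would bound the resulting correlation by comparing the $R$-values: a neighbour $w$ matters only if $R_w>R_u$ or it acts first, and a random ordering gives a factor of $1/2$. The alternative is an extra factor $p$ from a second independent protection bit, which gives $p^3$ overall. Taking the weaker of the two explains the $\min$. Making this precise will likely need a careful coupling of the processing order with the protection bits.
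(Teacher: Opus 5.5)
Your handling of $Y_{\bar u}=0$ is fine. Here $\bar u$ is the common left-node, $\bar v_i$ the right endpoints, and $\bar x_i=x_{\bar e_i}$. The intuition is also right: when $Y_{\bar u}=1$, the product $\bar x_1\bar x_2$ can end positive only if a selection step (Lines 6--8) at $\bar v_1$ or $\bar v_2$ evicts $\bar e_1$ or $\bar e_2$ from $P$. Your third bullet never occurs: as $f_1,f_2$ the two edges move by $-Z\alpha$ and $+Z\alpha$, and the stable-set argument behind (A3) covers every Brownian step anyway.

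The gap is the decomposition ``find $\mathcal G$ with $\Pr[\mathcal G]\ge q$ on which $\Phi_{\mathrm{final}}=0$, and use the supermartingale bound off $\mathcal G$.'' Supermartingality of $\bar x_1\bar x_2$ gives only $\E[\Phi_{\mathrm{final}}]\le\Phi_0$. It does not give $\E[\Phi_{\mathrm{final}}\mathbf{1}_{\neg\mathcal G}]\le\Pr[\neg\mathcal G]\,\Phi_0$, because $\neg\mathcal G$ is path-dependent. At a selection step at $\bar v_1$, the eviction probability is $1-\bar x_1=a_1$, where $a_1=x\bigl((N(\bar v_1)\setminus\{\bar e_1\})\cap P\bigr)$, both taken in the \emph{current} state. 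What survives the eviction is the \emph{current} $\bar x_1\bar x_2$. So you must control $\E[\bar x_1\bar x_2\,a_1]$ at a random time, after all three factors have evolved. Bounding the failure probability (via Markov/union bounds) and the surviving value separately discards exactly this coupling.

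The joint term is where this bites hardest, because it is not preserved automatically. Suppose a Brownian step at another left-node $w$ moves protected edges $h_1\in N(\bar v_1)$, $h_2\in N(\bar v_2)$ as its $e_1,e_2$. Then $\E[x'_{h_1}x'_{h_2}]=x_{h_1}x_{h_2}-\alpha^2$, which \emph{raises} the chance of an eviction.

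The missing idea is to put the eviction probability into the tracked quantity. While $\bar e_1,\bar e_2\in P$, the paper uses $\Phi=\bar x_1\bar x_2(a_1+a_2-s)$, and $\bar x_1\bar x_2$ otherwise. Here $a_2$ is defined like $a_1$. The term $s$ sums $x_{h_1}x_{h_2}$ only over protected pairs $h_1\in N(\bar v_1)\setminus\{\bar e_1\}$, $h_2\in N(\bar v_2)\setminus\{\bar e_2\}$ that have distinct left-nodes, or a common left-node $w$ with $R_w<R_{\bar u}$. One then checks $\E[\Phi']\le\Phi$ for each step type:
\begin{itemize}
\item removals cannot increase $\Phi$;
\item a selection at $\bar v_1$ uses $\bar x_1+a_1=1$ to give $\E[\Phi']=\bar x_1\bar x_2(a_1+a_2-a_1a_2)\le\Phi$, since $s\le a_1a_2$;
\item a Brownian step at $u\neq\bar u$ must have $R_u>R_{\bar u}$ by the tie-breaking rule, so it never moves both edges of a pair counted in $s$, and $a_1,a_2,s$ are martingales;
\item a Brownian step at $\bar u$ leaves $a_1+a_2-s\ge 0$ unchanged while $\E[\bar x'_1\bar x'_2]\le\bar x_1\bar x_2$.
\end{itemize}
The factor $\min\{p^2,p/2\}$ then comes from $\E[s\mid Y_{\bar u}=1]$ at initialization: $p^2$ for distinct left-nodes, $p/2$ for a shared one. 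It does not come from an inclusion--exclusion over failure events. Your ``random ordering gives $1/2$'' heuristic points the right way, but as written the proposal has no argument that actually delivers the bound.
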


The proof is lengthy and requires many intermediate steps.

\subsection{Basic analysis}
 To begin, we observe the following simple facts about the algorithm.

\begin{observation}
\label{obs0}
The following invariants hold at all intermediate stages:
\begin{enumerate}
\item $x(N(u) \cap P) \leq 1$ for all left-nodes $u$.
\item $x(N(v)) = 1$ for all right-nodes $v$.
\end{enumerate}
\end{observation}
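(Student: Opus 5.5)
The plan is to argue by induction over the iterations of the main loop: both invariants hold at initialization, and each of the three branches of the loop preserves them. At the start, $\vec x$ is a fractional matching, so $x(N(v)) = 1$ for every right-node $v$ and $x(N(u)) \le 1$ for every left-node $u$. Since $N(u) \cap P \subseteq N(u)$, it follows that $x(N(u) \cap P) \le x(N(u)) \le 1$. This settles both invariants for the initial configuration.

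Next, consider the first two branches. The first branch deletes a single edge from $P$, and the second deletes the set $N(v) \setminus \{e\}$ from $P$; neither one changes $\vec x$. Hence $x(N(v))$ is unchanged for every right-node $v$. For each left-node $u$, the set $N(u) \cap P$ can only shrink, and all entries of $\vec x$ are nonnegative, so $x(N(u)\cap P)$ can only decrease. Both invariants therefore survive these branches.

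The only real work is the third branch, where $\vec x$ is modified. For the right-node constraint, $v_1$ receives $+Z\alpha$ on $e_1$ and $-Z\alpha$ on $f_1$, both of which are edges of $N(v_1)$. Likewise $v_2$ receives $-Z\alpha$ on $e_2$ and $+Z\alpha$ on $f_2$, both in $N(v_2)$. So $x(N(v_1))$ and $x(N(v_2))$ are unchanged. Every other right-node has no modified incident edges, because the modified edges are exactly $e_1, e_2, f_1, f_2$. One should also check that $v_1 \ne v_2$: the graph is simple and $e_1 \ne e_2$ share the left-node $u$, so $v_1 \neq v_2$, and the four changes are booked to the correct right-nodes.

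For the left-node invariant in the third branch, node $u$ has $e_1, e_2 \in P$ with changes $+Z\alpha$ and $-Z\alpha$, which cancel in $x(N(u)\cap P)$. Any other left-node $w$ could only be affected through $f_1$ or $f_2$. These are chosen outside $P$, so they do not contribute to $x(N(w) \cap P)$ for any $w$, including $w = u_1$ or $w = u_2$. The only other possible overlap would be $u_1 = u$ or $u_2 = u$. In that case $f_i \notin P$ still means it is not counted, so the sum over $N(u)\cap P$ again changes by $+Z\alpha - Z\alpha = 0$.

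Finally, the choice of $\alpha$ keeps every entry in $[0,1]$, so nonnegativity is maintained for the next step. Thus $x(N(u)\cap P)$ is unchanged for every left-node $u$, which completes the induction. I do not expect a genuine obstacle here. The one point needing care is the bookkeeping of possible coincidences among $u, u_1, u_2$, which is handled by the fact that $f_1, f_2 \notin P$ and that $e_1, e_2$ carry opposite signs.
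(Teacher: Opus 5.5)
Your proof is correct and follows the same argument as the paper. The first two branches leave $\vec x$ unchanged and only shrink $P$, while in the third branch the changes on $e_1,e_2$ cancel at $u$ and the changes on $e_i,f_i$ cancel at $v_i$; your version simply spells out the base case, nonnegativity, and endpoint coincidences that the paper leaves implicit.
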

\begin{proof}
The updates at Lines 4 -- 5 and 6 -- 8 do not change $x$, and they only remove edges from $P$.  The update at Lines 10 -- 15 has been carefully chosen to ensure that $x(N(u) \cap P)$ does not change, since any change to $e_1 \in N(u) \cap P$ is compensated by exactly the opposite change to $e_2 \in N(u) \cap P$. Likewise $x(N(v_1))$ does not change, since the change to $e_1 \in N(v_1)$ is compensated by the exactly opposite change to $f_1 \in N(v_1)$. Likewise $x(N(v_2))$ does not change.
\end{proof}

\begin{observation}
\label{obs01}
At Line 11, there exist choices for edges $e_1, e_2, f_1, f_2$ and $\alpha > 0$.
\end{observation}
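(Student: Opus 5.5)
We argue directly from the conditions that must fail for the algorithm to reach the \texttt{else} branch, together with the invariants of \Cref{obs0}. At Line 11 the tests at Lines 4 and 6 have both failed, and $P \neq \emptyset$. Failure of Line 4 means every edge of $P$ is fractional, and every left-node $u$ with $N(u) \cap P \neq \emptyset$ has $|N(u) \cap P| \geq 2$. Failure of Line 6 means $x(N(v) \cap P) \neq 1$ for every right-node $v$.

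\textbf{Choosing $u$, $e_1$, $e_2$.} Since $P \neq \emptyset$, there is at least one left-node $u$ with $N(u) \cap P \neq \emptyset$, so the choice at Line 10 is well defined. For this $u$ we have $|N(u) \cap P| \geq 2$, so we can pick two distinct edges $e_1 = (u,v_1)$ and $e_2 = (u,v_2)$ in $P$. Because $G$ is simple, $v_1 \neq v_2$. Both edges lie in $P$, so both are fractional: $x_{e_1}, x_{e_2} \in (0,1)$.

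\textbf{Choosing $f_1$, $f_2$.} Consider $v_1$. By \Cref{obs0} we have $x(N(v_1)) = 1$, so $x(N(v_1) \cap P) \leq 1$. Since Line 6 failed, this inequality is strict. Hence
$$x(N(v_1) \setminus P) = 1 - x(N(v_1) \cap P) > 0,$$
and there is an edge $f_1 = (u_1, v_1) \in N(v_1) \setminus P$ with $x_{f_1} > 0$. This edge is also strictly below $1$: it is distinct from $e_1$ (since $e_1 \in P$ and $f_1 \notin P$), and both lie in $N(v_1)$, so $x_{f_1} \leq 1 - x_{e_1} < 1$. Thus $f_1$ is fractional. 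The same argument at $v_2$ gives a fractional edge $f_2 \in N(v_2) \setminus P$.

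\textbf{The four edges are distinct.} The edges $f_1, f_2$ are distinct from each other because they meet different right-nodes $v_1 \neq v_2$. They are distinct from $e_1, e_2$ because the $e_i$ lie in $P$ and the $f_i$ do not. So the four coordinates updated at Line 15 are four different variables.

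\textbf{Existence of $\alpha > 0$.} All four values $x_{e_1}, x_{e_2}, x_{f_1}, x_{f_2}$ lie strictly in $(0,1)$. The maximal admissible step is
$$\alpha = \min_{g \in \{e_1,e_2,f_1,f_2\}} \min\{x_g,\, 1 - x_g\},$$
which is strictly positive. This completes the argument.

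I do not expect a real obstacle here. The only point requiring care is the step showing that $f_1$ is fractional. Having $x_{f_1} > 0$ alone would not suffice; one also needs $x_{f_1} < 1$, which comes from $x_{e_1} > 0$ and the fact that $e_1$ and $f_1$ share the right-node $v_1$.
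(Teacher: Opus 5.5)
Your proof is correct and follows the paper's argument. Failure of Line 4 gives two fractional edges $e_1,e_2 \in N(u)\cap P$, failure of Line 6 together with $x(N(v_i))=1$ gives the edges $f_i \in N(v_i)\setminus P$, and fractionality of all four edges gives $\alpha>0$. Your explicit check that $x_{f_1} \le 1 - x_{e_1} < 1$, along with the distinctness of the four edges, makes precise steps that the paper leaves implicit.
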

\begin{proof}
Because the case at Line 4 was not activated, $u$ cannot have just a single edge in $P$, and there are no integral edges in $P$. Since $N(u) \cap P \neq \emptyset$, there must be available edges $e_1,e_2 \in P$, which are fractional. Since the case at Line 6 did not activate, we have $x(N(v_1) \cap P) < 1$ strictly; since $x(N(v_1)) = 1$, there must exist some fractional edge $f_1 \in N(v_1) \setminus P$. By a completely symmetric argument also some edge $f_2$ must be available. 

The choice $\alpha > 0$ is possible since all the edges $e_1, e_2, f_1, f_2$ are fractional.
\end{proof}

\begin{observation}
\label{obs1}
The expected number of iterations is $O( |E| )$; in particular, it terminates with probability one. The resulting value $X = \textsc{DepRoundBrownian}(x,p)$ satisfies (A1) and (A2).
\end{observation}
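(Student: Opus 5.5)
The proof splits into three parts: termination, (A1), and (A2). For termination, I will define a potential that strictly improves at each iteration of the main loop. Lines 4--5 and Lines 6--8 each remove at least one edge from $P$. The case at Lines 6--8 really does remove an edge: by \Cref{obs0} we have $x(N(v) \cap P) = 1$ while all edges of $P$ are fractional, so $|N(v)\cap P| \geq 2$ and at least one edge is dropped.

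For Lines 10--15, the maximal choice of $\alpha$ forces some edge among $e_1,e_2,f_1,f_2$ to hit $0$ or $1$.
\begin{itemize}
\item With probability at least $1/2$ over the sign $Z$, one of the edges $e_1, e_2$ becomes integral. The constraint binding $\alpha$ is symmetric under $\pm$, so whichever of the four edges attains the bound becomes integral for one of the two signs. I need to check this carefully: the maximal $\alpha$ is $\min$ over the four edges of $\min(x,1-x)$, and the minimizing edge becomes integral for exactly one sign.
\item Cleaner is to use the number $F$ of fractional edges as the potential. Every edge that reaches $\{0,1\}$ stays there forever, since the updates only move fractional edges, and $f_1,f_2$ are chosen fractional. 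So each Brownian step makes at least one edge integral with probability at least $1/2$, and otherwise $F$ does not increase.
\end{itemize}
The combined potential $|P| + |\{\text{fractional edges}\}| \le 2|E|$ is then non-increasing. It strictly decreases deterministically at Lines 4--8, and with probability at least $1/2$ at Lines 10--15. A standard geometric/Wald argument then bounds the expected number of iterations by $O(|E|)$, so the loop terminates almost surely. The one subtlety is that an edge $f_i \notin P$ could be made integral in a direction that keeps $x(N(v))=1$. This is harmless for the potential.

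For (A1), \Cref{obs0} gives $x(N(v))=1$ at termination. \textsc{IndependentRounding} then has each right-node $v$ select exactly one edge of $N(v)$ with probabilities $x_e$, so $\sum_{e\in N(v)} X_e = 1$ holds surely.

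For (A2), I will show that each $x_e$ is a martingale over the main loop.
\begin{itemize}
\item Lines 4--5 do not change $x$.
\item Lines 6--8 also do not change $x$; they only alter $P$.
\item At Lines 10--15, the increment is $\pm Z\alpha$. Here $\alpha$ is determined by the current state, and $Z$ is an independent uniform sign, so the conditional expectation of the change is zero.
\end{itemize}
The number of steps has finite expectation and the values are bounded in $[0,1]$, so optional stopping applies: the final fractional vector $x^{\text{fin}}$ satisfies $\E[x^{\text{fin}}_e] = x_e$. Independent rounding then gives $\E[X_e \mid x^{\text{fin}}] = x^{\text{fin}}_e$, and taking expectations yields (A2).

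The main obstacle I expect is the termination bound, specifically confirming the $\geq 1/2$ probability of progress at each Brownian step. The cleanest fix is to note that the two outcomes $x \pm \alpha$ each make some edge integral, because the $\alpha$-constraint is attained in a direction. If that fails for the sign drawn, the step still does not increase the potential. That suffices for the $O(|E|)$ expectation via a coupling with a sequence of fair coin flips.
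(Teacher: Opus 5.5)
Your proposal is correct and follows the same route as the paper: Lines 4--8 always shrink $P$; each execution of Lines 10--15 makes a new edge integral for at least one of the two signs of $Z$, so this happens with probability at least $1/2$, and integral edges never become fractional again; (A1) and (A2) then follow from \Cref{obs0} and the unbiasedness of the updates and of the final independent rounding, with your optional-stopping remark just making the last step explicit. One small cleanup: the binding edge may be $f_1$ or $f_2$ rather than $e_1$ or $e_2$, and in general only one of the two signs (not both, as your last paragraph suggests) produces a new integral edge, which is all the argument needs.
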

\begin{proof}
These properties were shown in \cite{baveja2024}, but we repeat them here for completeness.

In each application of Lines 4 -- 5 or Lines 6 -- 8, at least one edge is dropped from $P$; hence it can only be executed $|E|$ times. In each application of Lines 10 -- 15, there is at least one choice of $Z$ such that the updated vector $x'$ has one additional integral variable compared to $x$. Hence, for any value $t$, there are $O(1)$ rounds in expectation where $x$ has exactly $t$ integral variables. 

Property (A1) follows directly from the final independent rounding step while \Cref{obs0} ensures that the probabilities at each right-node sum to one. For property (A2), note that the modification step at Line 15 and the final independent rounding step are both unbiased.
\end{proof}

\begin{proposition}
\label{ya1}
Let $L$ be a stable edge-set. Conditional on the state $x$ at the beginning of any given iteration, the updated state $x'$ after the iteration satisfies
$$
\E \Bigl[ \prod_{e \in L} x'_e  \Bigr] \leq \prod_{e \in L} x_e
$$
\end{proposition}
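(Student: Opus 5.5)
Only iterations that execute Lines 10 -- 15 change $x$; Lines 4 -- 5 and 6 -- 8 only shrink $P$. So in the other cases $x' = x$ and the inequality holds with equality. We therefore fix an iteration that reaches Line 10, with edges $e_1=(u,v_1)$, $e_2=(u,v_2)$, $f_1=(u_1,v_1)$, $f_2=(u_2,v_2)$ and step size $\alpha$. The update is $x' = x + Z\alpha\, d$, where $d$ has entries $+1$ on $e_1$, $-1$ on $e_2$, $-1$ on $f_1$, $+1$ on $f_2$ (all other entries $0$), and $Z$ is uniform on $\{-1,1\}$.

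The plan is to view $\prod_{e\in L} x'_e$ as a polynomial in $Z\alpha$ and average over $Z = \pm 1$. Let $T = L\cap\{e_1,e_2,f_1,f_2\}$. Then
$$
\prod_{e\in L} x'_e = \prod_{e \in L\setminus T} x_e \cdot \prod_{g\in T}(x_g + Z\alpha d_g).
$$
Averaging over $Z$ kills all odd powers of $\alpha$. So $\E[\prod_{e\in L} x'_e] - \prod_{e\in L} x_e$ equals $\prod_{e\in L\setminus T}x_e$ times a sum of even-order terms. Each such term is $\alpha^{2k}$ times a product of $2k$ of the signs $d_g$ ($g\in T$) times the remaining $x_g$.

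The key combinatorial step is to use stability of $L$ to restrict which pairs can lie in $T$. We work in the line graph. Here $e_1$ and $e_2$ share $u$, so they are adjacent; $e_1,f_1$ share $v_1$, and $e_2,f_2$ share $v_2$. We can also assume $f_1\neq f_2$, since they lie at different right-nodes, and $u_1,u_2\neq u$, since $f_i\notin P$ while all $u$-edges on the relevant path are in $P$. Then $f_1$ and $e_2$ are at line-graph distance exactly two (via $e_1$), unless they are adjacent. They would be adjacent only if $u_1 = u$, which is impossible because $G$ is simple and $f_1\neq e_1$. Likewise $f_2,e_1$ are at distance exactly two. If $f_1,f_2$ share a left-node ($u_1=u_2$), they are adjacent; otherwise they are at distance $3$ via $f_1,e_1,e_2,f_2$, or at distance $2$ if some other path exists. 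Either way this pair has sign product $d_{f_1}d_{f_2} = -1$.

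Stability therefore forbids $\{f_1,e_2\}\subseteq L$ and $\{f_2,e_1\}\subseteq L$. The allowed subsets $T$ then have size at most $2$, and are among $\{e_1,e_2\}$, $\{e_1,f_1\}$, $\{e_2,f_2\}$, $\{f_1,f_2\}$, or singletons. Each allowed pair has sign product $-1$:
$$
d_{e_1}d_{e_2} = -1,\qquad d_{e_1}d_{f_1} = -1,\qquad d_{e_2}d_{f_2} = -1,\qquad d_{f_1}d_{f_2} = -1.
$$
Hence the only even-order term is $-\alpha^2$ times a nonnegative product (or $0$ if $|T|\le 1$). This gives $\E[\prod x'_e] \le \prod x_e$.

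The main obstacle is making sure no triple or larger $T$ survives; the size-3 subsets all contain a forbidden pair, which needs to be checked case by case. A second worry is degenerate coincidences such as $f_1 = f_2$ or $u_1=u$. These are ruled out by simplicity of $G$ and $v_1\neq v_2$, so every allowed configuration yields a nonpositive quadratic correction.
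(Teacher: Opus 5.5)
Your proof is correct and is essentially the paper's argument. The paper splits the modified edges into $M_{+}=\{e_1,f_2\}$ and $M_{-}=\{e_2,f_1\}$ and notes that each is a distance-two pair, so stability gives $|L\cap M_{\pm}|\le 1$; this disposes of your triple case at once by pigeonhole. It then computes $\E[(x_a+\alpha Z)(x_b-\alpha Z)]=x_ax_b-\alpha^2$ for the one remaining mixed pair, which is your sign-product computation; your explicit check that $f_1,e_2$ and $f_2,e_1$ are non-adjacent (via simplicity and $v_1\neq v_2$) is a small detail the paper leaves implicit.
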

\begin{proof}
If the current iteration does not execute Lines 10–15, then $x' = x$ and the claim is immediate. So suppose Lines 10–15 are executed, with modified edges $M_{+} = \{e_1, f_2 \}$ and $M_{-} = \{e_2, f_1 \}$ and $M = M_{-} \cup M_{+}$.  Note that $e_1, f_2$ have distance two (sharing neighbor $e_2$) and likewise $e_2, f_1$ have distance two. So, since $L$ is stable, we have $|L \cap M_{\pm}| \leq 1$. 

Now, if $M \cap L = \emptyset$, then $\prod_{e \in L} x'_e = \prod_{e \in L} x_e$ with probability one. If $|M \cap L| = 1$, say $M \cap L = \{g \}$, then we can write 
$$
\E \Bigl[ \prod_{e \in L} x'_e \Bigr] = \E[x'_g ]  \prod_{e \in L \setminus \{g \}} x_e
$$
where, because the update is unbiased, we have $\E[x'_g] = x_g$.

Finally, suppose that $|M \cap L| \geq 2$; this is only possible if $| M_{-} 
\cap L| = |M_{+} \cap L| = 1$.  Let us assume for concreteness that $M \cap L = \{e_1, e_2 \}$ (all possibilities are symmetric). Then, $\E[ x'_{e_1} x'_{e_2} ] = \E[ (x_{e_1} + \alpha Z)(x_{e_2} - \alpha Z) ] = x_{e_1} x_{e_2} + \alpha (x_{e_2} - x_{e_1}) \E[ Z ] - \alpha^2 \E[Z^2]$. Since $\E[Z] = 0$ and $Z^2 = 1$, this is $x_{e_1} x_{e_2} - \alpha^2$. So indeed
\[
\E \Bigl[ \prod_{e \in L } x'_e \Bigr] = (x_{e_1} x_{e_2} - \alpha^2) \prod_{e \in L \setminus \{e_1, e_2 \}} x_e \leq   x_{e_1} x_{e_2} \prod_{e \in L \setminus \{e_1, e_2 \}} x_e   = \prod_{e \in L} x_e. \qedhere
\]
\end{proof}

Property (A3) holds by \Cref{ya1} applied to each iteration and noting that $X$ is derived from $x$ by independent rounding.
We now turn to the harder strong-negative-correlation property.

\subsection{Proof of \Cref{gen-neg3}}
We will analyze a left-node $\bar u$ and a pair of incident edges $\bar e_1 = (\bar u, \bar v_1)$ and $\bar e_2 = (\bar u, \bar v_2)$.   Define
\begin{eqnarray*}
&\bar x_i = x_{\bar e_i}, \quad A_i =  N(\bar v_i) \setminus \{ \bar e_i \} \qquad i = 1,2
\end{eqnarray*}
and define $S$ to be the set of edge pairs $(h_1, h_2)$ with the following properties:
\begin{itemize}
\item $h_1 \in A_1 \cap P, h_2 \in A_2 \cap P$; and
\item Either (i) $h_1, h_2$ have distinct left-nodes, or

\hspace{0.38in} (ii) $h_1, h_2$ share a common left-node $u$ where $R_{u} < R_{\bar u}$.
\end{itemize}

Let us further define
$$
a_1 = x(A_1 \cap P), \quad a_2 = x(A_2 \cap P), \qquad s = \sum_{(h_1, h_2) \in S} x_{h_1} x_{h_2}
$$
where we note the inequality
\begin{equation}
\label{flup}
a_1 + a_2 - s \geq 
x( A_1 \cap P) + x(A_2 \cap P) - \sum_{\substack{h_1 \in A_1 \cap P,\\ h_2 \in A_2 \cap P}} x_{h_1} x_{h_2} \\
 = a_1 + a_2 - a_1 a_2 \geq 0
 \end{equation} 
 
\bigskip

 The key to the analysis is to track the following potential function:
$$
\Phi  = \begin{cases}
\bar x_1 \bar x_2 ( a_1 + a_2 - s ) & \text{if $\bar e_1, \bar e_2 \in P$} \\
\bar x_1 \bar x_2 & \text{otherwise}
\end{cases}
$$

\begin{lemma}
\label{gen-neg2}
Consider any iteration of the main loop; let $x',\Phi'$ be the updated values after the iteration. Conditional on the current state, there holds $\E[ \Phi'] \leq \Phi$.
\end{lemma}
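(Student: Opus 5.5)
We prove \Cref{gen-neg2} by case analysis on which branch of the main loop runs, and on whether $\bar e_1,\bar e_2$ are both in $P$ before and after the iteration. Once either $\bar e_i$ leaves $P$ it never returns, so $\Phi = \bar x_1\bar x_2$ from then on. In that regime only the Brownian step can change $\Phi$. The pair $\bar e_1,\bar e_2$ has distance one in the line graph, so $\{\bar e_1,\bar e_2\}$ is stable. Hence \Cref{ya1} gives $\E[\bar x_1'\bar x_2'] \le \bar x_1\bar x_2$ in that regime.

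\textbf{Both edges in $P$ and one is removed.} If $\bar e_1$ or $\bar e_2$ is dropped at Lines 4--5, or at Lines 6--8 (a right-node $\bar v_i$ with $x(N(\bar v_i)\cap P)=1$ selects a different edge), then $\Phi$ jumps from $\bar x_1\bar x_2(a_1+a_2-s)$ to $\bar x_1\bar x_2$. We need $a_1+a_2-s\ge 1$ whenever this can happen, or else a direct expectation computation.

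\begin{itemize}
\item At Line 6 with $v=\bar v_1$, we have $a_1 = 1-\bar x_1$. The edge $\bar e_1$ survives with probability $\bar x_1$, leaving $\Phi=\bar x_1\bar x_2(a_1'+a_2'-s')$ with $a_1'=0$ and $s'=0$. Otherwise it is dropped, giving $\bar x_1\bar x_2$. The expectation is therefore $\bar x_1\bar x_2\bigl(\bar x_1 a_2 + (1-\bar x_1)\bigr)$. We compare this with $a_1+a_2-s = (1-\bar x_1)+a_2-s$. This reduces to $s\le (1-\bar x_1)a_2$, which follows from $s\le a_1a_2$.
\item At Line 4 the edge $\bar e_i$ is integral or alone at $\bar u$. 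Alone is impossible while both edges are in $P$. If $\bar x_1=0$ then $\Phi=0$ both before and after. If $\bar x_1=1$, then $a_1=0$ by \Cref{obs0}, so $s=0$ and $\Phi=\bar x_2 a_2$. Moreover $\bar x_2=0$ by invariant~1 at $\bar u$, so both sides vanish.
\end{itemize}

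Edges of $A_i$ leaving $P$ (at Line 4, or at Line 6 for other right-nodes) only decrease $a_i$ and drop terms of $s$. The change in $a_1+a_2-s$ from removing $h\in A_1$ is $-x_h\bigl(1-\sum_{h_2:(h,h_2)\in S}x_{h_2}\bigr)\le 0$, since that inner sum is at most $a_2\le 1$. Line 6 at a right-node $v\ne\bar v_i$ removes edges from several $A_i$ at once, but each such $A_i$ loses edges only at $\bar v_i$ itself. The exception is Line 6 at $\bar v_i$, which is handled above.

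\textbf{Brownian step (Lines 10--15) with both edges in $P$.} This is the main obstacle. $\Phi$ is a polynomial in $x$ that changes along $\pm\alpha d$, and its expectation change equals its quadratic form in $d$ times $\alpha^2$. Linear terms cancel because $Z$ has mean zero, and $\Phi$ has degree at most 4, but each coordinate appears with degree at most 1 in every monomial. We must show the quadratic part is nonpositive. The subcases are:

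\begin{enumerate}
\item $u=\bar u$ and $\{e_1,e_2\}=\{\bar e_1,\bar e_2\}$. Then $f_1\in A_1\setminus P$ and $f_2\in A_2\setminus P$ do not affect $a_i$ or $s$. The change is $-\alpha^2(a_1+a_2-s)\le 0$ by (\ref{flup}).
\item $u=\bar u$ with $e_1=\bar e_1$ and $e_2$ a different edge. Only $\bar x_1$ changes, which is linear.
\item $u\ne\bar u$, so $\bar x_1,\bar x_2$ are unchanged but $e_1,e_2\in P$ may lie in $A_1,A_2$. If $e_1\in A_1$ and $e_2\in A_2$ share left-node $u$, the $a$ terms are linear. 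The $s$ term contributes $-x_{e_1}x_{e_2}$ only if $(e_1,e_2)\in S$, i.e. $R_u<R_{\bar u}$. But $\bar u$ still has edges in $P$, so the largest-$R$ rule forces $R_u>R_{\bar u}$. Thus $(e_1,e_2)\notin S$, and $s$ does not change in the quadratic sense.
\end{enumerate}

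Other cross terms in $s$ pair $e_j$ with edges of other left-nodes, whose values are unchanged, so they are linear. The $f_j$ lie outside $P$ and are irrelevant. This tie-breaking by $R_u$ is exactly why condition (ii) appears in the definition of $S$. Checking every configuration carefully — including $e_1,e_2$ both in $A_1$, in which case both are at $\bar v_1$ and share $u$, which contradicts simplicity of $G$ — completes the proof.
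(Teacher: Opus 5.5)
Your proof is correct and follows essentially the same case analysis as the paper: the Line~6 step at $\bar v_1$ reduces to $s \le a_1 a_2$, removals of edges in $A_i$ only decrease $a_1+a_2-s$, and the Brownian step at $u \ne \bar u$ is controlled by the largest-$R_u$ tie-breaking, which keeps $(e_1,e_2)$ out of $S$. The one difference is the Brownian step at $u=\bar u$, where you compute the loss $-\alpha^2(a_1+a_2-s)\le 0$ directly while the paper quotes \Cref{ya1}; as a minor cleanup, Line~6 at a right-node $v\notin\{\bar v_1,\bar v_2\}$ removes no edges of $A_1\cup A_2$ at all (since $A_i\subseteq N(\bar v_i)$), so $\Phi'=\Phi$ there and your remark that it ``removes edges from several $A_i$'' should be dropped.
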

\begin{proof}
Here we write $S', a_1', a_2', P'$ etc. for the updated values after the iteration.

If $\bar e_1 \notin P$ or $\bar e_2 \notin P$, then $\Phi = \bar x_1 \bar x_2$ and $\Phi' = \bar x_1' \bar x_2'$. In this case, the result follows from \Cref{ya1} (since $\{ \bar e_1, \bar e_2 \}$ is stable). So we suppose that $\bar e_1, \bar e_2 \in P$. Then there are a number of cases for the analysis.

\bigskip

\noindent \textbf{Case I: The update at Line 4 -- 5 runs for an edge $\pmb{e}$.} If $e = \bar e_1$ or $e = \bar e_2$, then by \Cref{obs0} we have $\bar x_1 \bar x_2 = 0$ and $\Phi' = \Phi = 0$. If $e \in A_1 \cap P$, then $a_1$ is decremented by $x(e)$ and $s$ is decremented by at most $a_2 x(e) \leq x(N(\bar v_2)) x_e \leq x_e$; in particular, $\Phi' \leq \Phi$. A completely analogous argument holds if $e \in A_2 \cap P$. If $e$ is any other edge then $\Phi' = \Phi$.

\bigskip

\noindent \textbf{Case II: The update at Line 6 -- 8 runs for a vertex $\pmb{v}$.} This will only modify the membership in $P$ for neighbors of $v$.  So if $v \notin \{ \bar v_1, \bar v_2 \}$, then $\Phi' = \Phi$. So suppose without loss of generality that $v = \bar v_1$, and so
$$
1 = x( N(\bar v_1) \cap P) = \bar x_1 + a_1
$$

With probability $\bar x_1$,  edge $\bar e_1$ gets selected to remain in $P$. In this case, $  A_1 \cap P' = \emptyset$ and $A_2 \cap P' = A_2 \cap P$ and $S' = \emptyset$, so $\Phi'=  \bar x_1 \bar x_2 a_2$. On the other hand, if $\bar e_1$ is not selected, then $\bar e_1 \notin P'$ and $\Phi' = \bar x_1 \bar x_2$. Putting the two cases together, we have
\begin{align*}
\E [\Phi' ] &= \bar x_1 \bigl(  \bar x_1 \bar x_2 a_2 \bigr) + (1 - \bar x_1)  \bigl( \bar x_1 \bar x_2 \bigr)  =  \bar x_1 \bar x_2 ( a_1 + a_2 - a_1 a_2 ) \\
&\leq  \bar x_1 \bar x_2 ( a_1 + a_2 - s  ) = \Phi
\end{align*} 
where the last inequality uses the bound (\ref{flup}).

\bigskip

\noindent \textbf{Case III: The update at Lines 10 -- 15 runs for a vertex $\pmb{u \neq \bar u}$.} 
Here $\bar u$ has two edges $\bar e_1, \bar e_2$ in $P$. So $\bar u$ would be eligible for the update. So, because of the tie-breaking rule, we must have \begin{equation}
\label{tat1}
R_u > R_{\bar u}.
\end{equation}

The only edges in $P$ whose value gets modified are $e_1, e_2$, so $\bar x'_1 = \bar x_1$ and $\bar x'_2 = \bar x_2$. The update steps are always unbiased, so $\E[ a'_1] = a_1$ and $\E[a_2'] = a_2$ and
$$
\E[s'] = \sum_{(h_1, h_2) \in S} \E[x'_{h_1} x'_{h_2} ]
$$

Now consider an edge pair $(h_1, h_2) \in S$. We cannot have $ |\{h_1, h_2 \} \cap \{e_1, e_2 \}| = 2$, as this would imply that $h_1, h_2, e_1, e_2$ all have the left-node $u$. But this is impossible since $S$ is defined to only include pairs sharing a left-node with \emph{smaller} value $R$ than $\bar u$, which contradicts (\ref{tat1}).

So $ |\{h_1, h_2 \} \cap \{e_1, e_2 \}| \leq 1$. Then since the update is unbiased we have $$
\E[x'_{h_1} x'_{h_2}] = x_{h_1} x_{h_2},
$$
and hence overall $\E[s'] = s$ and $
\E[ \Phi' ]  =  \bar x_1 \bar x_2 ( a_1 + a_2 - s) = \Phi.$

\bigskip

\noindent \textbf{Case IV: The update at Lines 10 -- 15 runs for vertex $\pmb{u = \bar u}$.} Again, $S$ does not change.  The only edges in $P$ whose value gets modified are $e_1, e_2$. Since such edges have left-node $\bar u$, they cannot be in $A_1$ or $A_2$.  Accordingly, we have 
$$
a_1' + a_2'  - s' =  a_1 +a_2- s
$$

So $\E[ \Phi' ] = \E[ \bar x_1' \bar x_2' ] ( a_1 + a_2 - s)$. By (\ref{flup}), the quantity $(a_1 + a_2 - s)$ is nonnegative. By \Cref{ya1}, we have $\E [\bar x'_1 \bar x'_2 ] \leq \bar x_1 \bar x_2$. So indeed
$\E[ \Phi' ] \leq \bar x_1 \bar x_2 (a_1 + a_2 - s) = \Phi$.
\end{proof}

\begin{proposition}
\label{aff1}
When drawing the random variables $Y$  and $R$, we have
$$
\E[ \Phi ] \leq \bar x_1 \bar x_2  \bigl( 1 -p + p^2 ( 2 -  \bar x_1 - \bar x_2)  - \min\{ p^3, p^2/2 \} (1 - \bar x_1)(1 - \bar x_2)  \bigr)
$$
\end{proposition}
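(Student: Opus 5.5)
The plan is to compute $\E[\Phi]$ directly at the moment $Y$ and $R$ are drawn, before any iteration of the main loop. At that point $\vec x$ is still the original fractional matching, so $x(N(\bar v_i)) = 1$ and hence $x(A_i) = 1 - \bar x_i$ for $i=1,2$. We condition on $Y_{\bar u}$.

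If $Y_{\bar u} = 0$, which happens with probability $1-p$, then $\bar e_1, \bar e_2 \notin P$ and $\Phi = \bar x_1 \bar x_2$. If $Y_{\bar u} = 1$, which happens with probability $p$, then both $\bar e_1, \bar e_2 \in P$ and $\Phi = \bar x_1 \bar x_2 (a_1 + a_2 - s)$. So it suffices to show
$$
p\,\E[a_1 + a_2 - s \mid Y_{\bar u}=1] \le p^2(2 - \bar x_1 - \bar x_2) - \min\{p^3, p^2/2\}(1-\bar x_1)(1-\bar x_2).
$$

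Take an edge $h \in A_i$. Because $G$ is simple, $h$ has a left-node different from $\bar u$. So $h \in P$ exactly when $Y_{u(h)} = 1$, an event of probability $p$ that is independent of $Y_{\bar u}$. By linearity, $\E[a_i \mid Y_{\bar u}=1] = p(1-\bar x_i)$, which produces the term $p^2(2-\bar x_1-\bar x_2)$.

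For $s$, we lower-bound its conditional expectation term by term over all pairs $(h_1,h_2) \in A_1 \times A_2$.
\begin{itemize}
\item If $h_1, h_2$ have distinct left-nodes, both of them differ from $\bar u$. The pair lies in $S$ exactly when both left-nodes have $Y = 1$, which has probability $p^2$.
\item If $h_1, h_2$ share a left-node $u$, then $u \ne \bar u$ again by simplicity. The pair lies in $S$ exactly when $Y_u = 1$ and $R_u < R_{\bar u}$. These events are independent of each other and of $Y_{\bar u}$, so the probability is $p \cdot \tfrac12$.
\end{itemize}
In either case the probability is at least $\min\{p^2, p/2\}$. Summing over all pairs gives
$$
\E[s \mid Y_{\bar u}=1] \ge \min\{p^2,p/2\}\, x(A_1)\, x(A_2) = \min\{p^2,p/2\}(1-\bar x_1)(1-\bar x_2).
$$
Multiplying by $p$ yields the $\min\{p^3,p^2/2\}$ term.

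Combining the two cases gives
$$
\E[\Phi] \le \bar x_1\bar x_2\bigl((1-p) + p^2(2-\bar x_1-\bar x_2) - \min\{p^3,p^2/2\}(1-\bar x_1)(1-\bar x_2)\bigr),
$$
which is the claimed bound. In fact the first two terms hold with equality, and only the $s$ term is an inequality.

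The only step requiring care is the independence bookkeeping. We must check that the left-nodes of edges in $A_i$ are never $\bar u$, which is where simplicity of $G$ is used. We must also check that in the shared-left-node case the tie-breaking condition $R_u < R_{\bar u}$ contributes an independent factor exactly $1/2$, independent of all the $Y$'s. Everything else is linearity of expectation.
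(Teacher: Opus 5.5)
Your proposal is correct and follows essentially the same route as the paper: condition on $Y_{\bar u}$, compute $\E[a_i] = p\,x(A_i)$ exactly, lower-bound each pair's probability of entering $S$ by $\min\{p^2, p/2\}$, and use $x(A_i) = 1 - \bar x_i$. Your explicit appeal to simplicity of $G$ (so that no edge of $A_i$ has left-node $\bar u$, giving the needed independence from $Y_{\bar u}$ and $R_{\bar u}$) usefully spells out a point the paper leaves implicit.
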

\begin{proof}
Each edge $(u,v)$ goes into $P$ precisely when $Y_u = 1$, which holds with probability $p$. So
$$
\E[a_i] = \E[ x(A_i \cap P) ] = p x(A_i), \qquad i = 1,2
$$

Consider a pair of edges $h_1 \in A_1, h_2 \in A_2$. If $h_1, h_2$ have distinct left-nodes $u_1, u_2$, then $(h_1, h_2)$ goes into $S$ precisely when $Y_{u_1} = Y_{u_2} = 1$, which holds with probability $p^2$. Otherwise, if they share a common left-node $u$, then $(h_1, h_2)$ goes into $S$ precisely when $Y_u = 1$ and $R_u < R_{\bar u}$, which holds with probability $p/2$.  So
$$
\E [s] \geq \min\{p^2, p/2 \} x(A_1) x(A_2) 
$$

Furthermore, all these bounds are independent of $Y_{\bar u}$. Now, if $Y_{\bar u} = 0$, which holds with probability $1-p$, then $\Phi = \bar x_1 \bar x_2$; otherwise, if $Y_{\bar u} = 1$, then 
\begin{align*}
\E[ \Phi \mid Y_{\bar u} = 1]  &= \bar x_1 \bar x_2 \E \bigl[ a_1 + a_2 - s  \mid Y_{\bar u} = 1 \bigr] \leq \bar x_1 \bar x_2 \Bigl( p x(A_1) + p x(A_2) - \min \{p^2, p/2 \} x(A_1) x(A_2) \Bigr)
\end{align*}

Since $x$ is a fractional matching, $x(A_1) = 1 - \bar x_1$ and $x(A_2) = 1 - \bar x_2$. So:
$$
\E[ \Phi \mid Y_{\bar u} = 1] \leq \bar x_1 \bar x_2 \Bigl(  p (1-\bar x_1) + p(1-\bar x_2) - \min \{p^2, p/2 \} (1 - \bar x_1) (1 - \bar x_2) \Bigr).
$$

Putting the two cases together gives:
$$
\E[ \Phi ] \leq (1-p) \bar x_1 \bar x_2 + p \bar x_1 \bar x_2 \Bigl(  p (1 - \bar x_1) + p(1 - \bar x_2) - \min\{p^2, p/2 \} (1 - \bar x_1) (1 - \bar x_2) \Bigr),
$$
which is the claimed bound after some rearrangement.
\end{proof}

\begin{proof}[Proof of \Cref{gen-neg3}]
Let $ x', P', \Phi'$ denote the state after all iterations of the main loop. By  using \Cref{gen-neg2} over all loop iterations, we get
$$
\E[ \Phi' ] \leq \E[ \Phi ] 
$$
By \Cref{aff1}, the initial value $\Phi$ satisfies
$$
\E[ \Phi ] \leq (1-p) \bar x_1 \bar x_2 + p \bar x_1 \bar x_2  \bigl(  p (1 - \bar x_1) + p(1 - \bar x_2) - \min\{p^2, p/2 \} (1 - \bar x_1) (1 - \bar x_2) \bigr)
$$

At the end of the process, $P' =\emptyset$ and hence $\Phi' = \bar x'_1 \bar x'_2$. The final independent rounding step, conditioned on the terminal state $ x'$, then gives $\E[ X_{\bar e_1} X_{\bar e_2} ] = \bar x'_1 \bar x'_2$.
\end{proof}

This already gives significantly improved constant-factor guarantees compared to \cite{baveja2024}:

\begin{corollary}
With $p = 1/3$, distinct edges $e,f$ sharing a left-node have $\E[X_e X_f] \leq \tfrac{23}{27} x_e x_f \leq 0.852 \  x_e x_f$.
\end{corollary}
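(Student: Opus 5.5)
This is a direct specialization of \Cref{gen-neg3} with $p = 1/3$; the only work is maximizing the resulting ratio over $(x_e, x_f) \in [0,1]^2$. Since $p = 1/3 < 1/2$, we have $p^3 = 1/27$ and $p^2/2 = 1/18$, so $\min\{p^3, p^2/2\} = p^3 = 1/27$. Write $\bar x_1 = x_e$, $\bar x_2 = x_f$, and set $a = 1 - \bar x_1$, $b = 1 - \bar x_2$, both in $[0,1]$. \Cref{gen-neg3} then gives $\E[X_e X_f] \leq \bar x_1 \bar x_2 \, g(a,b)$, where
$$
g(a,b) = \frac{2}{3} + \frac{1}{9}(a+b) - \frac{1}{27} ab .
$$

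It remains to show $g(a,b) \leq 23/27$ on the unit square. First, $g$ is affine in each variable separately. Second, $\partial g/\partial a = \frac19 - \frac{1}{27} b \geq \frac19 - \frac1{27} > 0$ for $b \in [0,1]$, and by symmetry the same holds for $\partial g/\partial b$. Hence $g$ is nondecreasing in each coordinate, and its maximum is at $a = b = 1$:
$$
g(1,1) = \frac{2}{3} + \frac{2}{9} - \frac{1}{27} = \frac{18 + 6 - 1}{27} = \frac{23}{27}.
$$
Finally, $23/27 = 0.85185\ldots \leq 0.852$, which gives the stated numerical constant.

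There is no real obstacle here. The only points needing care are selecting the correct branch of the $\min$ for $p = 1/3$, and checking monotonicity rather than only the interior critical points. The bound is approached when $x_e, x_f \to 0$, consistent with the paper's remark that this algorithm does better for large values.
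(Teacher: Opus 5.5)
Your proposal is correct and takes the same approach as the paper: specialize Theorem~\ref{gen-neg3} at $p = 1/3$, where $\min\{p^3, p^2/2\} = 1/27$, and bound the resulting factor by $23/27$. Your monotonicity argument spells out the maximization the paper leaves as routine. You maximize over all of $[0,1]^2$ rather than the paper's region $x_1 + x_2 \leq 1$; this is harmless because $[0,1]^2$ contains that region and the maximizer $x_e = x_f = 0$ lies in both.
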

\begin{proof}
Since $$
\bigl( 1 -p + p^2 ( 2 -  x_{1} - x_{2})  - \min\{ p^3, p^2/2 \} (1 - x_{1})(1 - x_{2})  \bigr) \leq 23/27
$$
for all $x_1, x_2 \in [0,1]$ with $x_1 + x_2 \leq 1$.
\end{proof}

To improve further, we will combine this Brownian-motion algorithm with the contention-resolution algorithm.

\section{Combining algorithms}
\label{combine-algorithm}
At this point, we recall the alternative rounding algorithm of \cite{harris2025}. It satisfies properties (A1), (A2), (A3), and it also satisfies a strong-negative-correlation property similar to (A4). The full result is quite technical to state; we summarize it as follows:
\begin{theorem}[\cite{harris2025}]
Suppose we have a vector $\rho \in [0,1]^E$ with $\rho(N(u)) = 1$  for each left-node $u$.  Then, for any distinct edges $e,f$ sharing a common left-node, the Dirichlet rounding mechanism satisfies
$$
\E[X_e X_f] \leq x_e x_f \Psi(x_e,x_f;\rho_e, \rho_f)
$$
for an explicitly-given function $\Psi$ which satisfies the following bound:
$$
\Psi(x_1, x_2; \rho_1, \rho_2) \leq \frac{\Gamma(\rho_1 (1/x_1 - 1) + 1) \Gamma(\rho_2 (1/x_2 - 1) + 1)}{ \Gamma(\rho_1 (1/x_1-1) + \rho_2 (1/x_2 - 1) + 1)}
$$
\end{theorem}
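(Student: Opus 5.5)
Since this theorem is quoted from \cite{harris2025}, we would reconstruct its proof from the structure of the Dirichlet mechanism. The plan is to condition on the randomness of the common left-node, use conditional independence at the two right-nodes, and evaluate a Dirichlet moment integral. We take the mechanism to be the following. Each left-node $u$ draws a vector $(W_e)_{e \in N(u)} \sim \mathrm{Dirichlet}((\rho_e)_{e \in N(u)})$, which is well-defined because $\rho(N(u)) = 1$. Each right-node $v$ then runs a contention-resolution rule among its incident edges, using the values $W_e$ and $x_e$, calibrated so that $\Pr(X_e = 1) = x_e$ exactly.

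Fix the common left-node $\bar u$ and edges $e = (\bar u, v_1)$, $f = (\bar u, v_2)$. Given the vector $W^{(\bar u)}$, the outcomes at $v_1$ and $v_2$ depend on $W_e$ and $W_f$ respectively, and on the Dirichlet draws of the other left-nodes. The first step is to show
\[
\E[X_e X_f \mid W^{(\bar u)}] \leq g_e(W_e)\, g_f(W_f), \qquad g_e(w) := \Pr(X_e = 1 \mid W_e = w).
\]
If $v_1, v_2$ share no other left-neighbour, this is an equality by independence. Otherwise we use the negative-correlation property of the mechanism, the analogue of (A3), applied to the remaining randomness. In a bipartite graph $v_1$ and $v_2$ have no common neighbour other than $\bar u$ that matters here, so this should only need the standard conditional product bound.

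The second step is to make $g_e$ explicit. For Dirichlet-type contention resolution, the probability that $e$ wins at $v_1$ given $W_e = w$ should take the form $g_e(w) = c_e\, w^{\beta_e}$. Here the exponent $\beta_e$ is determined by requiring $\E[g_e(W_e)] = x_e$ with $W_e \sim \mathrm{Beta}(\rho_e, 1 - \rho_e)$. If $g_e$ is only bounded above by such a power, we accept an inequality; this direction suffices for an upper bound. Matching the Beta moment yields normalizing constants that are ratios of Gamma functions in $\rho_e$ and $\beta_e$. The quantities $\rho_i(1/x_i - 1)$ in the final formula suggest that the natural exponents are $a_1 = \rho_e(1/x_e - 1)$ and $a_2 = \rho_f(1/x_f - 1)$, after rewriting in terms of the complementary masses.

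The third step uses the aggregation property: $(W_e, W_f, 1 - W_e - W_f) \sim \mathrm{Dirichlet}(\rho_e, \rho_f, 1 - \rho_e - \rho_f)$. This reduces
\[
\E[g_e(W_e)\, g_f(W_f)] \big/ \bigl( \E[g_e(W_e)]\,\E[g_f(W_f)] \bigr)
\]
to a ratio of Dirichlet moments, i.e.\ a ratio of Gamma functions. After simplifying the constants, we expect this to collapse to
\[
\frac{\Gamma(a_1 + 1)\,\Gamma(a_2 + 1)}{\Gamma(a_1 + a_2 + 1)} = (a_1 + a_2 + 1)\, B(a_1 + 1, a_2 + 1) \leq 1 .
\]
That this is at most $1$, by log-convexity of $\Gamma$, gives a sanity check on the result.

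The main obstacle is the second step, that is, pinning down the exact form of $g_e$. The competition at $v_1$ also involves other left-nodes' Dirichlet coordinates, so $g_e$ is a nontrivial integral. We would first try to prove a monotone upper bound $g_e(w) \leq c_e w^{\beta_e}$ with the same mean, perhaps via a convexity or majorization argument in $w$, so that the exact Dirichlet moment computation in the third step still applies. If this fails, the fallback is to keep $\Psi$ as the exact integral. We would then derive the Gamma bound only in the extremal configuration, where all other edges at $v_1$ and $v_2$ are infinitesimal, and argue that this configuration is the worst case.
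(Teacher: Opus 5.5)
The paper gives no proof of this statement (it is imported from \cite{harris2025}), so there is no route to compare against. Judged on its own, your outline has a sensible skeleton, but its decisive step is conjectured rather than proved. Your Step~3 produces exactly $\Gamma(a_1+1)\Gamma(a_2+1)/\Gamma(a_1+a_2+1)$ only if $g_e(w)=\Pr(X_e=1\mid W_e=w)$ is \emph{exactly} $c_e w^{a_1}$ with $c_e\E[W_e^{a_1}]=x_e$, and likewise for $f$; only then do the normalizing constants cancel. You never derive this. The right-node rule is left unspecified, the exponent is read off from the target formula, and the condition $\E[g_e(W_e)]=x_e$ cannot determine it, being one equation for the two unknowns $c_e,\beta_e$.

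The fallback does not rescue the constant either. A pointwise bound $g_e\le c_ew^{\beta_e}$ with the same mean forces equality, and with a larger mean the constants no longer cancel. For example, suppose $e$'s clock is obtained from $W_e\sim\mathrm{Beta}(\rho_e,1-\rho_e)$ through its CDF $F_e$, so that $-\log F_e(W_e)\sim\mathrm{Exp}(1)$. Then $g_e(w)=F_e(w)^{1/x_e-1}\le w^{a_1}$ by stochastic dominance. But the resulting bound is the target multiplied by $\frac{\Gamma(1+\rho_e+a_1)}{\Gamma(1+\rho_e)\Gamma(1+a_1)}\cdot\frac{\Gamma(1+\rho_f+a_2)}{\Gamma(1+\rho_f)\Gamma(1+a_2)}>1$. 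At $x_e=x_f=\rho_e=\rho_f=\tfrac12$ this gives $(4/\pi)\,x_ex_f$, worse than trivial.

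The missing idea is to use extra independent randomness per edge so that the edge variable is exactly $\mathrm{Beta}(\rho_e,1)$. Take $B_e\sim\mathrm{Beta}(1,\rho_e)$ independent; then $Y_e=W_eB_e\sim\mathrm{Beta}(\rho_e,1)$ by the beta-product identity, which is where $\rho(N(u))=1$ is used. Hence $E_e=-\rho_e\log Y_e\sim\mathrm{Exp}(1)$, and each right node $v$ selects the $g\in N(v)$ minimizing $E_g/x_g$. The edges at $v$ lie at distinct left nodes, so this is a race of independent exponentials with rates summing to $1$, giving $\Pr(X_e=1)=x_e$. Moreover $\Pr(X_e=1\mid E_e)=e^{-(1/x_e-1)E_e}=Y_e^{a_1}$, so $g_e(w)=\E[B_e^{a_1}]\,w^{a_1}$ is an exact power, and your Step~3 yields the Gamma ratio exactly. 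I am not claiming this is verbatim the calibration of \cite{harris2025}; it is one for which your plan closes.

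Your Step~1 is also mis-justified. In general $v_1$ and $v_2$ share many left-neighbours besides $\bar u$ (all of them in a complete bipartite graph). Also, (A3) is an unconditional statement about the $X$'s, not a bound on probabilities conditional on $\bar u$'s randomness. What is needed is the following. Conditional on $\bar u$'s variables, $\{X_e=1\}$ is the event that every clock $E_g/x_g$ with $g\in N(v_1)\setminus\{e\}$ exceeds $E_e/x_e$, and similarly for $f$. These events involve disjoint sets of clocks, and each other left node $u'$ contributes at most one clock $g_1$ to the first set and one clock $g_2$ to the second. For such a pair, $\Pr(E_{g_1}>\alpha,\,E_{g_2}>\beta)\le\Pr(E_{g_1}>\alpha)\Pr(E_{g_2}>\beta)$. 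This holds because Dirichlet vectors are negatively associated and $E_g$ is a decreasing function of $W_g$ given the independent $B_g$. Independence across left nodes then gives $\E[X_eX_f\mid \bar u\text{'s variables}]\le Y_e^{a_1}Y_f^{a_2}$. With these two pieces supplied, your outline becomes a complete proof: condition on $\bar u$, apply the conditional product bound, use the exact power law, and finish with the Dirichlet mixed moment. As written, its central step is still open.
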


\begin{observation}
\label{together-obs}
For any parameters $p,q \in [0,1]$, we can achieve properties (A1), (A2), (A3), and distinct edges $e,f$ sharing a common left-node satisfy
$$
\E[X_e X_f] \leq x_e x_f \Bigl( q \bigl( 1-p + p^2 (2 - x_e - x_f) - \min\{p^3, p^2/2 \} (1 - x_e)(1 - x_f) \bigr) + \frac{(1-q) \Gamma(2 - x_f) \Gamma(2 - x_e)}{\Gamma(3 - x_e - x_f)}  \Bigr)
$$
\end{observation}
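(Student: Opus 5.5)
The plan is to use a random mixture of the two rounding schemes. With probability $q$ we run $\textsc{DepRoundBrownian}(\vec x,p)$. With probability $1-q$ we run the Dirichlet rounding mechanism of \cite{harris2025}, instantiated with the weight vector $\rho = \vec x$. This is a legal choice of $\rho$ because $\vec x$ is a fractional matching, so $\rho(N(u)) = x(N(u)) = 1$ for every left-node $u$. The coin deciding which algorithm to run is drawn independently of everything else.

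Properties (A1)--(A3) survive the mixture, as follows.
\begin{itemize}
\item (A1) is a pointwise property of the output. Both algorithms satisfy it almost surely, so the mixture does too.
\item (A2) and (A3) are statements about expectations, of the form $\E[X_e] = x_e$ and $\E[\prod_{e\in S} X_e] \le \prod_{e \in S} x_e$. They are linear in the output distribution. Conditioning on the coin and averaging, each holds for the mixture because it holds for each component. The Brownian component is covered by Observation~\ref{obs1} and Proposition~\ref{ya1}; the Dirichlet component by the cited theorem of \cite{harris2025}.
\end{itemize}

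For the pairwise bound, condition on the coin again:
$$
\E[X_e X_f] = q\,\E_{\mathrm{Br}}[X_eX_f] + (1-q)\,\E_{\mathrm{Dir}}[X_eX_f].
$$
Theorem~\ref{gen-neg3} bounds the first term by $x_e x_f\bigl(1-p+p^2(2-x_e-x_f) - \min\{p^3,p^2/2\}(1-x_e)(1-x_f)\bigr)$. The cited theorem bounds the second by $x_e x_f \Psi(x_e,x_f;\rho_e,\rho_f)$. Substituting $\rho_e = x_e$ and $\rho_f = x_f$ into the Gamma-function bound gives
$$
\rho_e(1/x_e - 1) + 1 = 2 - x_e, \qquad \rho_f(1/x_f-1)+1 = 2-x_f,
$$
and the denominator argument becomes $(1-x_e)+(1-x_f)+1 = 3-x_e-x_f$. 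This produces exactly the stated factor $\Gamma(2-x_e)\Gamma(2-x_f)/\Gamma(3-x_e-x_f)$.

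The degenerate cases $x_e = 0$ or $x_f = 0$ (where $1/x$ is undefined) make both sides $0$, so they hold trivially. I should also confirm that the Dirichlet theorem's hypotheses are satisfied with $\rho = \vec x$, since it is a black box here. The only place I expect any care is this identification, namely that $\rho = \vec x$ is the intended instantiation. Everything else is bookkeeping with linearity of expectation.
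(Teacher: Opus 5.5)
Your proposal is correct and matches the paper's proof: run \textsc{DepRoundBrownian} with probability $q$ and the Dirichlet mechanism with $\rho = \vec x$ with probability $1-q$, then combine the two pairwise bounds by conditioning on the coin. Your substitution giving $\Gamma(2-x_e)\Gamma(2-x_f)/\Gamma(3-x_e-x_f)$ and your argument that (A1)--(A3) survive the mixture are both right, and they spell out steps the paper leaves implicit.
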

\begin{proof}
Since we are given a fractional matching $\vec x$, we will naturally set $\rho_e = x_e$ for all edges $e$. Then, the Dirichlet mechanism gives
$$
\E[X_e X_f ] \leq \frac{x_e x_f \Gamma(2-x_e) \Gamma(2-x_f)}{\Gamma(3-x_e -x_f)} \
$$

With probability $q$, we run \textsc{DepRoundBrownian} and with probability $1 - q$ we run the Dirichlet rounding algorithm.
\end{proof}

\begin{proposition}
\label{tatat1}
Let $x_1, x_2 \geq 0$ with $x_1 + x_2 \leq 1$. For $p = 0.439327, q = 0.822077$, there holds:
$$
q \bigl( 1-p + p^2 (2 - x_1 - x_2) - \min\{p^3, p^2/2 \} (1 - x_1)(1 - x_2) \bigr) +  \frac{(1-q) \Gamma(2-x_1) \Gamma(2-x_2)}{\Gamma(3-x_1-x_2)}  \leq 0.79751
$$
\end{proposition}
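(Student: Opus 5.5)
Since $p = 0.439327 < 1/2$, we have $\min\{p^3, p^2/2\} = p^3$, so the left-hand side is the explicit function
$$
F(x_1,x_2) = q\bigl(1-p+p^2(2-x_1-x_2) - p^3(1-x_1)(1-x_2)\bigr) + (1-q)\,B(x_1,x_2), \qquad B(x_1,x_2)=\frac{\Gamma(2-x_1)\Gamma(2-x_2)}{\Gamma(3-x_1-x_2)},
$$
on the triangle $T=\{x_1,x_2\ge 0,\ x_1+x_2\le 1\}$. Note that $B = (2-x_1-x_2)\,\mathrm{Beta}(2-x_1,2-x_2)^{-1}$-type expressions are smooth here; in fact $B(x_1,x_2) = (2-x_1-x_2)\cdot\mathrm{B}(1-x_1+1,\,1-x_2+1)\cdot$const is not needed: it suffices that $B$ is symmetric, smooth, and has explicit partial derivatives through the digamma function. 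The plan is a rigorous numerical verification. Both terms are symmetric in $(x_1,x_2)$, so we work with $x_1 \le x_2$.

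First, I would reduce dimension. Write $t = x_1+x_2$ and $d = x_2 - x_1$. The Brownian term equals $q(1-p+p^2(2-t)) - qp^3\bigl((1-t/2)^2 - d^2/4\bigr)$, which is increasing in $d^2$ with coefficient $qp^3/4$. For the Dirichlet term, $\log B = \log\Gamma(2-x_1)+\log\Gamma(2-x_2) - \log\Gamma(3-t)$, and for fixed $t$ convexity of $\log\Gamma$ shows $B$ is also increasing in $|d|$. So for fixed $t$ the maximum lies on the boundary $x_1 = 0$, $x_2 = t$, provided we check this monotonicity carefully. Alternatively, I would avoid relying on that and simply bound on a 2D grid. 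On the edge $x_1=0$ we get $B(0,t) = \Gamma(2-t)/\Gamma(3-t) = 1/(2-t)$, and the whole expression becomes the elementary one-variable function
$$
g(t) = q\bigl(1-p+p^2(2-t)-p^3(1-t)\bigr) + \frac{1-q}{2-t}, \qquad t\in[0,1].
$$
This function is convex in $t$, as a linear function plus $(1-q)/(2-t)$. Hence its maximum is attained at an endpoint: $g(0)=q(1-p+2p^2-p^3)+(1-q)/2$ and $g(1)=q(1-p+p^2)+(1-q)$. Presumably $p,q$ were chosen to equalize these two values, and I would verify numerically that both are at most $0.79751$.

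The main obstacle is justifying the reduction to $x_1 = 0$, i.e. showing that for fixed $t$, $F$ is nondecreasing in $d$. The Brownian part is clearly fine. For the Gamma part, $\frac{\partial}{\partial d}\log B = \tfrac12(\psi(2-x_1)-\psi(2-x_2))\ge 0$ when $x_1\le x_2$, since the digamma function $\psi$ is increasing. That handles it cleanly, so the statement reduces to the two endpoint evaluations above, together with a careful check of the constants to enough decimal places. If the monotonicity step somehow failed, the fallback would be a grid evaluation over $T$ combined with a Lipschitz bound, using $|\partial B|\le \max|\psi|$ differences.
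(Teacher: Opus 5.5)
Your proposal is correct and follows essentially the same route as the paper. The paper also bounds both terms, for fixed $t=x_1+x_2$, by their values at $x_1=0$, using $(1-x_1)(1-x_2)\geq 1-t$ and $\Gamma(2-x_1)\Gamma(2-x_2)/\Gamma(3-t)\leq 1/(2-t)$ (your digamma-monotonicity argument is exactly a proof of the latter), and then checks the resulting one-variable inequality in $t$. Your observation that $g$ is convex makes the paper's ``mechanical'' final check explicit: it reduces to the endpoint values $g(0)\approx 0.797506$ and $g(1)\approx 0.797507$, which you only say you would verify, and both are indeed below $0.79751$.
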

\begin{proof}
Let $t = x_1 + x_2$. By straightforward calculus, we can observe that
$$
p^2 (2 - x_1 - x_2) - \min\{p^3, p^2/2 \} (1 - x_1)(1 - x_2)  \leq p^2 (2 - t) - \min\{p^3, p^2/2 \} (1 - t)
$$
and
$$
\frac{\Gamma(2-x_1)\Gamma(2-x_2)}{\Gamma(3-x_1-x_2)} \leq  \frac{1}{2-t}.
$$
So it suffices to show that, for $t \in [0,1]$, we have
\begin{equation}
\label{rar112}
q (1-p + p^2 (2 - t) - \min\{p^3, p^2/2 \} (1 - t) ) + \frac{1-q}{2-t} \leq 0.79751
\end{equation}

The LHS is an algebraic function of $t$, so it can be shown mechanically.
\end{proof}

Putting together \Cref{together-obs} and \Cref{tatat1}, this concludes the proof of \Cref{mainthm34}.

\section{Lower bound}
\label{lb}

Consider graph $G=C_{30}$ with vertices
$u_1,v_1,u_2,v_2,\ldots,u_{15},v_{15}$ in cyclic order
(left-nodes $u_i$, right-nodes $v_i$), and $x_e=\tfrac12$ on every edge.
Define $Z_i=X_{u_i,v_i}$ and
$\bar Z_i=X_{u_{i+1},v_i}=1-Z_i$, where indices are taken modulo~$15$.

\begin{figure}[h]
\centering
\begin{tikzpicture}[scale=0.96]
  % coordinates
  \foreach \i in {1,...,15}{
    \pgfmathsetmacro{\tu}{90-24*(\i-1)}
    \pgfmathsetmacro{\tv}{78-24*(\i-1)}
    \coordinate (u\i) at (\tu:4.0);
    \coordinate (v\i) at (\tv:4.0);
  }

  % edges and labels on a single inner ring
  \foreach \i in {1,...,15}{
    \pgfmathsetmacro{\ma}{84-24*(\i-1)}
    \pgfmathsetmacro{\mb}{72-24*(\i-1)}
    \pgfmathtruncatemacro{\ip}{mod(\i,15)+1}

    \draw[very thick] (u\i) -- (v\i);
    \draw[very thick] (v\i) -- (u\ip);

    \node[font=\footnotesize] at (\ma:3.62) {$Z_{\i}$};
    \node[font=\footnotesize] at (\mb:3.62) {$\bar Z_{\i}$};
  }

  % vertices and vertex labels
  \foreach \i in {1,...,15}{
    \pgfmathsetmacro{\tu}{90-24*(\i-1)}
    \pgfmathsetmacro{\tv}{78-24*(\i-1)}

    \filldraw[fill=black!75]
      (u\i) +(-0.080,-0.080) rectangle +(0.080,0.080);
    \filldraw[fill=white, draw=black, thick]
      (v\i) circle (0.090);

    \node[font=\footnotesize] at (\tu:4.38) {$u_{\i}$};
    \node[font=\footnotesize] at (\tv:4.38) {$v_{\i}$};
  }
\end{tikzpicture}
\end{figure}

We record the following combinatorial fact  (constructed by solving a dual linear program).

\begin{observation}
\label{simple-obs}
Define
\begin{align*}
Q_d&=\sum_{i=1}^{15}Z_i\bar Z_{i-d}
\end{align*}
where indices are interpreted mod 15.

Then $$
16Q_1-14Q_2+11Q_3-9Q_4+6Q_5-4Q_6+Q_7
\geq -3.
$$
\end{observation}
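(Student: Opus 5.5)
The inequality concerns only the $2^{15}$ vectors $Z \in \{0,1\}^{15}$, with $\bar Z_j = 1 - Z_j$. My plan is to first put it in a transparent combinatorial form and then certify it by a finite check, shrunk as much as possible by symmetry.

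\emph{Step 1 (rewriting).} I would substitute $\bar Z_{i-d} = 1 - Z_{i-d}$ to get $Q_d = k - \sum_{i} Z_i Z_{i-d}$, where $k = \sum_i Z_i$. The coefficients $c_1,\dots,c_7 = 16,-14,11,-9,6,-4,1$ sum to $7$. Since $15$ is odd, every unordered pair $\{i,j\}$ of distinct indices has a unique cyclic distance $d \in \{1,\dots,7\}$, and it appears exactly once in $\sum_i Z_iZ_{i-d}$ for that $d$. Writing $T = \{ i : Z_i = 1\}$, the claim therefore becomes
$$
\sum_{\{i,j\} \subseteq T} c_{\mathrm{dist}(i,j)} \;\leq\; 7|T| + 3 \qquad \text{for every } T \subseteq \mathbb{Z}_{15}.
$$
This is a statement about a weighted graph on the cycle $\mathbb{Z}_{15}$ with alternating-sign weights depending only on distance. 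The large positive weights $16, 11, 6, 1$ sit on odd distances, so sets $T$ with many odd-distance pairs are the dangerous ones.

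\emph{Step 2 (symmetry reduction and verification).} The left-hand side is invariant under the dihedral group acting on $\mathbb{Z}_{15}$. It is therefore enough to check one representative of each bracelet, roughly $1200$ classes rather than $32768$ subsets. This is a mechanical enumeration, and I would present it as the actual certificate, in the same spirit as the ``mechanical'' verification used for \Cref{tatat1}. Alongside the enumeration I would verify the natural extremal candidates by hand:
\begin{itemize}
\item for $|T| \leq 2$ the bound is immediate, since the largest weight is $16 \leq 14 + 3$;
\item for $|T| = 3$ one checks the few distance-triples $(d_1,d_2,d_3)$ with $d_1 + d_2 = d_3$ or $d_1 + d_2 + d_3 = 15$;
\item I would also check near-alternating patterns such as $T = \{1,3,5,\dots,15\}$, with any other tight cases flagged by the enumeration.
\end{itemize}
Note the constant $3$ is odd, so I expect the bound to be tight for some such set.

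\emph{Step 3 (a more conceptual certificate, if desired).} The remark that the inequality came from solving a dual linear program suggests a human-readable proof. It would express $7|T| + 3 - \sum c_{\mathrm{dist}}$ as a nonnegative combination of simple valid local inequalities on windows of consecutive coordinates, summed over all rotations. A natural family is, for instance, $Z_iZ_{i+1} \leq$ stuff, or $(Z_i - Z_{i+1} + Z_{i+2} - \cdots)^2$-type integrality bounds such as $m(m-1) \geq 0$ for integer $m$. I would first try the Fourier picture. Writing $\sum_{i,j} c(i-j) Z_iZ_j = \tfrac{1}{15} \sum_m \lambda_m |\hat Z(m)|^2$ with $\lambda_m = 2\sum_{d=1}^{7} c_d \cos(2\pi m d/15)$ shows the form is dominated by the frequency $m$ near $15/2$. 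I would then combine this with an integrality term such as $\bigl(\sum_i (-1)^i Z_i\bigr)$-parity to absorb the additive constant.

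The main obstacle is exactly this additive $+3$. A purely spectral (continuous) bound will not capture it, because the extremal configurations are integral and the odd cycle length prevents a perfect alternation. So, unless a clean dual certificate emerges, I would rely on the finite symmetry-reduced enumeration of Step 2 as the proof.
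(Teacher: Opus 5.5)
Your proposal is correct and is essentially the paper's proof, which simply checks all $2^{15}$ vectors; your rewriting as $\sum_{\{i,j\}\subseteq T} c_{\mathrm{dist}(i,j)} \le 7|T|+3$ (valid since $\sum_d c_d = 7$ and $15$ is odd) and your dihedral reduction to $1224$ bracelets are sound refinements of that same finite check. One caution about your heuristics: the tight cases are arc-like sets such as $\{1,2,3,4\}$, $\{1,\dots,6\}$ or $\{1,2,6,7,11,12\}$, while near-alternating sets are very slack (for $T=\{1,3,\dots,15\}$ the original left-hand side equals $168$), because $c_d\cos(14\pi d/15) = -|c_d|\cos(\pi d/15) < 0$ for every $d$, so the Fourier mode nearest $15/2$ helps the inequality rather than threatening it.
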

\begin{proof}
Try all $2^{15}$ possible sequences $(Z_1, \dots, Z_{15})$. 
\end{proof}

\begin{theorem}
Consider any bipartite rounding algorithm satisfying  properties (A1), (A2) for $G$, which also satisfies the modified bounds
\begin{itemize}
\item[(A3-)] $\E[ X_e X_f ] \leq x_e x_f$ for any pair of edges $e,f$ whose distance in the line graph is at least 3.

\item[(A4')] $\E[ X_e X_f ] \leq c \ x_e x_f$ for any distinct  edges $e,f$ sharing a left-node. 
\end{itemize}

Then $c \geq 41/80$.
\end{theorem}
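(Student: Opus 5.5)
The plan is to take expectations of the deterministic inequality in \Cref{simple-obs}, using (A2), (A3-) and (A4') to bound each $\E[Z_i \bar Z_{i-d}]$ from the appropriate side according to the sign of its coefficient. Since the inequality holds for every outcome, $\E[16Q_1-14Q_2+11Q_3-9Q_4+6Q_5-4Q_6+Q_7] \geq -3$. By symmetry of the bounds, each of the $15$ terms in $Q_d$ gets the same bound. So it suffices to bound $\E[Z_i\bar Z_{i-d}]$ from above for $d=1,3,5,7$ (positive coefficients) and from below for $d=2,4,6$ (negative coefficients). Throughout, $x_e = 1/2$, so $x_e x_f = 1/4$.

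\textbf{Positions in the line graph.} The line graph of $C_{30}$ is again a $30$-cycle; list the edges in cyclic order as $Z_1,\bar Z_1,Z_2,\bar Z_2,\dots$.
\begin{itemize}
\item The edge $Z_i=(u_i,v_i)$ sits at position $2i-1$, and $\bar Z_{i-d}=(u_{i-d+1},v_{i-d})$ sits at position $2(i-d)$. Their line-graph distance is therefore $\min(2d-1,\,31-2d)$.
\item For $d=1$ the two edges share the left-node $u_i$, so (A4') gives $\E[Z_i\bar Z_{i-1}]\le c/4$.
\item For $d=3,5,7$ the distance is $5,9,13\ge 3$, so (A3-) gives $\E[Z_i\bar Z_{i-d}]\le 1/4$.
\end{itemize}

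\textbf{Lower bounds for even $d$.} Here I would use (A1) in the form $\bar Z_{i-d}=1-Z_{i-d}$, which gives
\[
\E[Z_i\bar Z_{i-d}] = \E[Z_i]-\E[Z_iZ_{i-d}] = \tfrac12-\E[Z_iZ_{i-d}].
\]
The edges $Z_i$ and $Z_{i-d}$ are at line-graph distance $\min(2d,30-2d)\ge 4$ for $2\le d\le 7$. Hence (A3-) gives $\E[Z_iZ_{i-d}]\le 1/4$, and so $\E[Z_i\bar Z_{i-d}]\ge 1/4$ for $d=2,4,6$.

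\textbf{Assembling the bound.} Combining these estimates,
\[
-3 \le 15\Bigl(16\cdot\tfrac{c}{4} + \tfrac{11+6+1}{4} - \tfrac{14+9+4}{4}\Bigr) = 60c - \tfrac{135}{4},
\]
which rearranges to $c\ge \tfrac{123}{240}=\tfrac{41}{80}$.

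The only delicate step is the geometric bookkeeping. One must check the line-graph distances and confirm that every negatively-weighted term can be lower-bounded through the complement identity coming from (A1). It must also be checked that this is valid using only the pairwise hypothesis (A3-), with no appeal to the stable-set property (A3). The combinatorial inequality itself is taken as given from \Cref{simple-obs}.
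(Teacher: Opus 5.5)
Your proof is correct and follows essentially the same route as the paper: you take expectations in \Cref{simple-obs}, use (A4') for $Q_1$, and use (A3-) together with (A1) and (A2) for $Q_2,\dots,Q_7$. The only difference is cosmetic: the paper pins down $\E[Z_i\bar Z_{i-d}]=1/4$ exactly (four nonnegative products summing to one, each at most $1/4$), whereas you derive just the one-sided bound that each coefficient's sign requires, via $\E[Z_i]=1/2$ and $\bar Z_{i-d}=1-Z_{i-d}$.
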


\begin{proof}
Each of the fifteen summands $Z_i\bar Z_{i-1}$ in $Q_1$ corresponds to the
two edges $(u_i,v_i)$ and $(u_i,v_{i-1})$, which share left-node $u_i$ and have $x$-values of $1/2$.
Therefore
\[
\E[Q_1]\leq\frac{15c}{4}.
\]

We next claim that, for any $i \in \{1, \dots, 15 \}, d \in \{2, \dots, 13 \}$, we have
\begin{equation}
\label{yw1}
\E[ Z_i Z_{i-d} ] = \E[ Z_i \bar Z_{i-d} ] = \E[ \bar Z_i Z_{i-d} ]  = \E[ \bar Z_i \bar Z_{i - d} ] = 1/4
\end{equation}

For, all four of the corresponding edge pairs have distance at least 3. So, property (A3-) gives
\begin{equation}
\label{yw2}
\E[ Z_i Z_{i-d} ], \E[ Z_i \bar Z_{i-d} ],  \E[ \bar Z_i Z_{i-d} ], \E[ \bar Z_i \bar Z_{i - d} ] \leq 1/4
\end{equation}

On the other hand, the four random variables \[
Z_iZ_{i-d},\qquad
Z_i\bar Z_{i-d},\qquad
\bar Z_iZ_{i-d},\qquad
\bar Z_i\bar Z_{i-d}
\] are nonnegative and sum to one. So all the inequalities in (\ref{yw2}) must be tight. 

By (\ref{yw1}), we thus have
$$
\E[Q_d]=15/4 \qquad \text{for $d\in\{2,\ldots,13\}$}
$$

Taking expectations in \Cref{simple-obs}, we obtain\[
-3
\leq
16 (15 c/4)
+
(-14+11-9+6-4+1) (15/4)
=
60c- 135/4
\]
and consequently
\[
c\geq\frac{41}{80}. \qedhere
\]
\end{proof}

\section{Acknowledgments}
Thanks to Karthik Abinav, for bringing the $C_{30}$ lower bound construction (found via ChatGPT) to my attention.

\bibliographystyle{alpha}
\bibliography{refs}

\begin{thebibliography}{HLRV26}

\bibitem[BQS24]{baveja2024}
Alok Baveja, Xiaoran Qu, and Aravind Srinivasan.
\newblock Approximating weighted completion time via stronger negative correlation.
\newblock {\em Journal of Scheduling}, 27(4):319--328, 2024.

\bibitem[BSS21]{bansal2016lift}
Nikhil Bansal, Aravind Srinivasan, and Ola Svensson.
\newblock Lift-and-round to improve weighted completion time on unrelated machines.
\newblock {\em SIAM Journal on Computing}, 50(3):STOC16--138, 2021.

\bibitem[Har25]{harris2024dependent}
David~G. Harris.
\newblock Dependent rounding with strong negative-correlation, and scheduling on unrelated machines to minimize completion time.
\newblock {\em ACM Transactions on Algorithms}, 22:13:1--13:24, 2025.

\bibitem[HLRV26]{harris2025}
David~G. Harris, George~Z. Li, Nitya Raju, and Renata Valieva.
\newblock The {D}irichlet {M}echanism for rounding with strong negative correlation, with applications.
\newblock In {\em 53rd International Colloquium on Automata, Languages, and Programming (ICALP)}, pages 107:1--107:21, 2026.

\bibitem[IL23]{im2023}
Sungjin Im and Shi Li.
\newblock Improved approximations for unrelated machine scheduling.
\newblock In {\em Proc. 2023 ACM-SIAM Symposium on Discrete Algorithms (SODA)}, pages 2917--2946, 2023.

\bibitem[IS20]{im2020}
Sungjin Im and Maryam Shadloo.
\newblock Weighted completion time minimization for unrelated machines via iterative fair contention resolution.
\newblock In {\em Proc. 2020 ACM-SIAM Symposium on Discrete Algorithms (SODA)}, pages 2790--2809, 2020.

\end{thebibliography}
\end{document}